\newcommand{\R}{\mathbb{R}}
\newcommand{\N}{\mathbb{N}}
\newcommand{\strat}{\chi}
\newcommand{\rd}{\mathrm{d}}
\newcommand{\rv}{\mathrm{v}}
\newcommand{\opt}{{\mathrm{opt}}}
\newcommand{\radon}{\mathsf{RM}_+}
\newcommand{\nash}{{\mathrm{Nash}}}
\newcommand{\bydef}{:=}
\newcommand{\eps}{\varepsilon}
\newcommand{\e}{\mathrm{e}}
\newcommand{\PD}{\mathsf{P}}
\newtheorem{lemma}{Lemma}
\newtheorem{definition}{Definition}
\newtheorem{remark}{Remark}
\newtheorem{corollary}{Corollary}
\newtheorem{theorem}{Theorem}
\newtheorem{proposition}{Proposition}
\tikzstyle{estado} = [rectangle, rounded corners, minimum width=2cm, minimum height=2cm,font=\Huge,text centered, draw=black, fill=black!5]
\begin{document}

\title{Optimal Vaccination Strategies and Rational Behaviour in Seasonal Epidemics}
\author{Paula Rodrigues${}^*$, Paulo Doutor${}^*$, Maria do C\'eu Soares${}^*$, Fabio A. C. Chalub%
\thanks{Departamento de Matemática and Centro de Matemática e Aplicações, Faculdade de Ciências e Tecnologia, Universidade Nova de Lisboa, Quinta da Torre, 2829-516, Caparica, Portugal. e-mail:\{pcpr,pjd,mcs,chalub\}@fct.unl.pt}
}
\date{\today}

\maketitle

\begin{abstract}
We consider a SIRS model with time dependent transmission rate. We assume time dependent vaccination which confers the same immunity as natural infection. We study two types of vaccination strategies: i) optimal vaccination, in the sense that it minimizes the effort of vaccination in the set of vaccination strategies for which, for any sufficiently small perturbation of the disease free state, the number of infectious individuals is monotonically decreasing; ii) Nash-equilibria strategies where all individuals simultaneously minimize the joint risk of vaccination versus the risk of the disease. The former case corresponds to an optimal solution for mandatory vaccinations, while the second corresponds to the equilibrium to be expected if vaccination is fully voluntary. We are able to show the existence of both optimal and Nash strategies in a general setting. In general, these strategies will not be functions but Radon measures. For specific forms of the transmission rate, we provide explicit formulas for the optimal and the Nash vaccination strategies.

\end{abstract}

\textbf{Keywords}: Epidemiological models; Vaccination strategies; Game theory; Seasonal epidemics.


\section{Introduction}

Vaccination is the best response available in the control of most infectious diseases. A huge effort is put on the development of new and better vaccines. When humans are directly involved, the role of direct experimentation is naturally limited and therefore mathematical models have been used to evaluate the effect of control measures, such as vaccination, to assist in policy decisions. One central result of classical mathematical models for the spread of infectious diseases is that persistence of an infectious disease within a population requires the density of susceptible individuals to exceed a strictly positive critical value such that, on average, each primary case of infection generates more than one secondary case. It is therefore not necessary to vaccinate everyone within a community to eliminate infection. This phenomenon is known as herd immunity  and is one of the key epidemiological questions in defining a vaccination strategy.

In this work, we consider a SIRS model with periodic transmission. The model consists of a non-autonomous system of ordinary differential equations in which we introduce periodic vaccination of adults. For simplicity, we considered that vaccination confers the same protection as natural infection. We study the consequences of two extreme types of vaccination strategies: mandatory vaccination, where the population is vaccinated at a predefined rate; and voluntary vaccination, where individuals can choose freely to be vaccinated or not, according to their risk perception.

Mathematical models have  been widely used to help health authorities in the definition of vaccination strategies for very different contexts. Typically, the objective is to define an {\it optimal vaccination} strategy   by minimizing combinations of the vaccination effort/cost  and of the effective reproduction number  $\mathcal{R}_0$ (i.e., the number of secondary infections generated by a primary case) \citep{isham1996,castillo-chavez1998,LaguzetMB}. 
This can give rise to particularly interesting problems when we consider non-homogeneous models for which vaccination strategy depends on age \citep{isham1996,castillo-chavez1998,tartof2013}, risk-groups \citep{scott2015,long2011} or when it is time-dependent \citep{onyango2014,donofrio2002, browne2015,houy2016}. The current work is concerned with time-dependent epidemic models with vaccination when both the transmission rate and the vaccination are assumed to be periodic. Note that periodic vaccinations are used by public health services; e.g., influenza vaccine is only available in a specific season of the year.

Whenever the goal is long term disease elimination, optimal vaccination will consist on reducing $\mathcal{R}_0$ below one, in the sense that it implies the attractiveness and the asymptotic stability of the disease-free state. 
 Here, we choose to work with an alternative definition of optimal vaccination, where the goal is, not only to eliminate disease  but also to prevent outbreaks. Hence, we define a class of preventive vaccination  profiles such that, for any sufficiently small perturbation of the disease free state, the number of infectious individuals is monotonically decreasing. We construct the optimal vaccination strategy as the limit of preventive strategies for which  vaccination effort is minimized. We start by  revisiting basic concepts in mathematical epidemiology to recall that for constant transmission, the  condition $\mathcal{R}_0\le 1$, and subsequent stability of the disease-free state, is equivalent to the condition that  infectious population $I$ is monotonically decreasing in time, when the initial number of susceptible individuals is below the number of susceptible individuals in the disease free state. However, as we move towards more general situations this equivalence may not hold. Note that the former condition refers to the long term behaviour of the system, while the latter considers also the short time behaviour which, in principle, is more restrictive \citep{hastings2010,hastings2004}. In particular, the former condition restricts the average transmission over a period  and the latter is defined pointwise in time. Our approach is particularly suitable for diseases with high mortality or morbidity rates, for which it is imperative to prevent outbreaks. Despite that, so far our model and examples will not consider disease related death.

On the opposite end of vaccination policies is voluntary vaccination, which is increasingly common in industrialized countries. Even when vaccines are offered by the public health system without costs, vaccination is, at least in part, voluntary. Opposition to vaccines can be philosophical, religious and  depend on social contacts and information available. It puts important challenges to disease control by decreasing vaccine uptake. The case which is best known is the measles, since the unproven hypotheses that measles-mumps-rubella (MMR) vaccine was linked to autism led to a decrease in vaccination followed by measles epidemics in UK \citep{fitzpatrick2004,jansen2003}. Voluntary vaccination can also give rise to free-rider phenomenon, where individuals or families choose not to be vaccinated, or to not have their children vaccinated, taking advantage of herd immunity created in the population by others, avoiding the possible negative effects of vaccination. In this work, we consider a population of rational individuals that compares the risk of the vaccination (more precisely, its perception of the risk of the vaccination) and the risk of the disease and make options that minimizes the joint risk. Despite the fact that some countries are implementing fines for parents that prefer not to vaccinate their children\footnote{That's the case of Poland and Australia. See~http://www.thenews.pl/1/9/Artykul/204007, Parents-fined-for-not-vaccinating-children and http://naturalsociety.com/australia-enforces-15k-penalty-for-parents-who-dont-vaccinate/, respectively.}, we do not introduce in the model a risk of non-vaccination, other than the one associated to the disease.

In this work, we model human behaviour using game theory. In a seminal paper  by \citet{bauch2004}, it has been shown that voluntary vaccination cannot lead to disease eradication. The authors coupled a SIR model for disease spread in a partially vaccinated population with a theoretical game framework describing a population of rational individuals. Many subsequent developments were made in order to include the human behaviour in epidemiological models (cf. \citet{chen2006,dOnofrio2007,manfredi2009,coelho2009,
mbah2012,manfredi2013,morin2013,bhat2015,LaguzetBMB}). See also \citep{funk2010,wang2015} for a review, and \citep{funk2015} for further discussion on the subject.

In this paper, we generalize the framework of \citet{bauch2004} to the SIRS model with periodic transmission function. From the modelling point of view, we consider that all choices in the population influence the dynamics, and the resulting dynamics also has effect in the rational behaviour of the population. Due to the richness of the non-autonomous system that describes our model, several technical problems arise. For instance, the risk of disease no longer depends simply on the constant steady state as before. Considering a rational individual, we assume that he/she is going to choose to be vaccinated only when the risk of disease times the probability of being infected is higher than the risk of the vaccine, as perceived by the taker. As we analyse only stationary states of our periodic system, the risk to be minimized is the joint risk of vaccination and disease during one season. Hence, we define the set of herd immunity provider vaccination strategies, for which the rational strategy for a given focal (rational) individual is not to be vaccinated, taking advantage of the herd immunity provided by the choices of the rest of the population. Moreover, we define a Nash vaccination strategy as the strategy that minimizes the joint risk for every individual taking into account the strategy of all other individuals, i.e., the natural strategy to be expected in a population of rational individuals with full knowledge of all epidemiological data.

Existence of optimal and Nash vaccination strategies are proved in this work in a very general setting; however,  these strategies may not be functions but Radon measures, even for  transmission rates given by a real function. This is a consequence of the fact that the set of continuous functions in a given compact interval is not closed under any reasonable metric. Many results used in the existence proofs presented in the appendices require compactness and after introducing a convenient topology in the set of continuous functions, we are naturally led to the introduction, in this framework, of Radon measures. For more information on the topic of Radon measures we refer to~\citep{schwartz1973,athreya2006}.

The paper is organized as follows. In Section \ref{sec:model}, we introduce the mathematical model and derive some preliminary results. Section~\ref{sec:vaccinations} is dedicated to the vaccination strategies. First, we give rigorous definitions of preventive vaccination strategies, and of  vaccination effort and we define the optimal strategy as one strategy that can be arbitrarily approximated by a preventive strategy and such that its associated effort is never superior to the effort of any given preventive strategy.  In the context of voluntary vaccination, we define the set of herd immunity provider strategies and the concept of Nash strategy, in which all individuals minimizes the joint risk of vaccination and disease. In the end of the Subsection~\ref{ssec:rational}, we state the main theorem, which guarantees the existence of an optimal and a Nash vaccination strategies in the set of Radon measures. Explicit formulas for the optimal and Nash strategies are provided in Subsection~\ref{sec:regular}, for specific forms of the transmission rate.  In Section \ref{sec:examples}, we present some examples such as the constant transmission case, the sinusoidal case and also a critical case to illustrate the results from previous sections. {We finish with two appendices, the first one guaranteeing the existence of periodic solutions in the model and the second proving the existence of optimal and Nash strategies.

\section{The Model}\label{sec:model}

Consider a SIRS model. Let $S(t)$, $I(t)$, $R(t)$ be the fraction of susceptible, infectious and recovered individuals at time $t\ge 0$. We assume non negative normalized initial conditions, i.e, $S(0),I(0),R(0)\ge 0$, $S(0)+I(0)+R(0)=1$. We also assume the transitions $S+I\stackrel{\beta}{\longrightarrow}2I$, $I\stackrel{\gamma}{\longrightarrow}R$, $R\stackrel{\alpha}{\longrightarrow}S$, $S\stackrel{p}{\longrightarrow}R$. 
Constants $\mu$ (mortality/birth rate), $\alpha$ (temporary immunity) and $\gamma$ (recovery rate) are strictly positive. 
The disease is assumed to be non-fatal, i.e., the death rate $\mu>0$ does not depend on the disease class. By normalization, we also consider the birth rate as $\mu$. These are common assumptions of compartmental epidemiological models. 

We consider functions $\beta,p:\R_+\to\R_+$, representing the transmission and vaccination rates at time $t$, respectively. More precise assumptions on these functions will be introduced latter on.

From now on, we call SIRS model to the following system of differential equations:
\begin{subequations}\label{system}
\begin{align}\label{systemS}		
			S'&=\mu+\alpha R-\beta(t)IS-p(t)S-\mu S \nonumber \\&=\mu+\alpha-\alpha I-\beta(t)IS-p(t)S-(\mu+\alpha)S \\
\label{systemI}
			I'&=\beta(t)IS-\gamma I - \mu I \\
\label{systemR}
			R'&=\gamma I + p(t) S-\mu R-\alpha R 
\end{align}
\end{subequations}
 A schematic representation of the SIRS model with vaccination is represented in Figure~\ref{fig:SIR}. Due to the normalization $S(t)+I(t)+R(t)=S(0)+I(0)+R(0)=1$, the equation for $R$ is always redundant and will be ignored from now on. We define 
 $\Delta^2\bydef\{(x,y)\in\R^2|\,x\ge 0,y\ge 0,x+y\le 1\}$.\\

\begin{figure}
\centering 
\begin{tikzpicture}
\node (ss) [estado] {S};
\node (ii) [estado,right of=ss,xshift=3cm] {I};
\node (rr) [estado,right of=ii,xshift=3cm] {R};

\draw[line width=1.5pt,->] (ss) -- node[above=0.1cm,font=\Large] {$\beta(t) I$} (ii);

\draw[line width=1.5pt,->] (ii) -- node[above=0.1cm,font=\Large] {$\gamma$} (rr);

\draw[line width=1.5pt,->] (-2cm,0cm) -- node[above=0.1cm,font=\Large] {$\mu$} (ss);
\draw[line width=1.5pt,->] (ss) -- node[right=0.1cm,yshift=-0.5cm,font=\Large] {$\mu$} (0cm,-2.5cm);
\draw[line width=1.5pt,->] (ii) -- node[right=0.1cm,yshift=-0.5cm,font=\Large] {$\mu$} (4cm,-2.5cm);
\draw[line width=1.5pt,->] (rr) -- node[right=0.1cm,yshift=-0.5cm,font=\Large] {$\mu$} (8cm,-2.5cm);

\draw[line width=1.5pt,->] (0cm,1cm) arc (150:30:4.5cm and 1.5cm);

\draw(1.5cm,1.8cm) node[font=\Large] {$p(t)$};

\draw[line width=1.5pt,->] (8cm,-1cm) arc (-30:-150:4.5cm and 1.5cm);

\draw(7cm,-1.8cm) node[font=\Large] {$\alpha$};

\end{tikzpicture}
\caption{Schematic diagram of the SIRS model with death and vaccinations.}
\label{fig:SIR}
\end{figure}
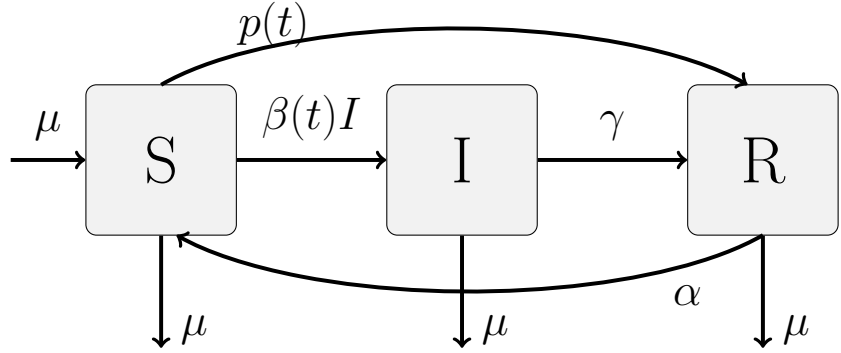

We begin by analysing the solutions and stability of system \eqref{system}.

\begin{lemma}\label{lem:sol_per}
Let us consider that functions $\beta$ and $p$ are continuous functions with commensurable periods, i.e., there exists $T>0$ such that $p(t+T)=p(t)$ and $\beta(t+T)=\beta(t)$ for all $t$. 
Equivalently, we assume that $\beta,p \in C([0,T])$ with $(p(0),\beta(0))=(p(T),\beta(T))$.

Therefore, there exists only
one periodic solution of system~\eqref{system} in the subspace $\{I(t)=0,\ \forall t\}$, given by $(S_0(t),0)\in\Delta^2$. We call this solution the \emph{disease-free} solution. 
This solution attracts all initial conditions of the form $(S(0),0)$. We define $I_0(t)=0$.

Depending on the choices of the parameters $\gamma,\mu$ and the functions $p,\beta$, we may have one of two possibilities:
\begin{enumerate}
 \item The disease-free solution is globally stable in $\Delta^2$;
 \item There are other periodic solutions (with period multiple of $T$), called \emph{endemic} solutions $(S_i(t),I_i(t))\in\Delta^2$, with $I_i(t)>0$, for all $t\in\R_+$ and  $i\in\mathbb{N}$. In this case, there is $\eta>0$ such that for any 
initial condition $(S(0),I(0))$ with $I(0)>0$, we have $\liminf_{t\to\infty} I(t)> \eta$. In this case, we say that the solution of the SIRS model is \emph{persistent}.
\end{enumerate}
Furthermore, for any initial condition, the solution $(S(t),I(t))$ depends continuously on $\beta$ and $p$; namely, if $p_n\to p$ and $\beta_n\to\beta$ weakly as measures and both sequences are uniformly integrable, then $S_i[p_n,\beta_m]\to S_i[p,\beta]$ and $I_i[p_n,\beta_m]\to I_i[p,\beta]$ uniformly in $[0,T]$, when $n,m\to\infty$, $i\in\mathbb{N}$.
\end{lemma}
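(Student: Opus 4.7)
The plan is to break the proof into three parts: (a) existence, uniqueness and attractivity of the disease-free periodic solution; (b) the dichotomy between global stability of $(S_0,0)$ and persistence with an endemic periodic orbit; and (c) continuous dependence on $(p,\beta)$ in the weak-measure / uniformly integrable topology. For (a), on the invariant line $\{I\equiv 0\}$ equation~\eqref{systemS} collapses, after substituting $R=1-S$, to the scalar linear non-autonomous ODE $S'=(\mu+\alpha)-(\mu+\alpha+p(t))\,S$. The integrating factor gives an explicit formula for the general solution whose homogeneous part decays at rate at least $\mu+\alpha>0$, so Floquet theory yields a unique $T$-periodic solution $S_0(t)$ that attracts every other solution of the reduced equation. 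A check of the vector field on $\partial\Delta^2$ (strictly inward on $\{S=0\}$ and on $\{S+I=1\}$, tangent on $\{I=0\}$) shows that $\Delta^2$ is forward invariant, hence $(S_0(t),0)\in\Delta^2$ for all $t$.

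For (b), introduce the Poincaré map $\Phi:\Delta^2\to\Delta^2$, $\Phi(S_*,I_*)\bydef(S(T;S_*,I_*),I(T;S_*,I_*))$, whose fixed points (and fixed points of iterates) are in bijection with the periodic solutions of \eqref{system}. The disease-free fixed point $(S_0(0),0)$ is always present, with transverse Floquet multiplier
\[
\lambda_I=\exp\!\left(\int_0^T\!\bigl(\beta(s)S_0(s)-\gamma-\mu\bigr)\,\rd s\right).
\]
When $\lambda_I\le 1$, pick $\eps>0$; a comparison argument giving $S(t)\le S_0(t)+\eps$ eventually, combined with Gronwall applied to $I'=(\beta S-\gamma-\mu)I$, forces $I(t)\to 0$, yielding global asymptotic stability of the disease-free orbit (case~1). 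When $\lambda_I>1$, the disease-free fixed point is unstable in the transverse direction, and since $(S_0(\cdot),0)$ is the only invariant set on $\{I=0\}\cap\Delta^2$, standard uniform persistence theorems for dissipative periodic semiflows (the acyclic-covering framework of Hofbauer--So--Freedman and Zhao) produce $\eta>0$ with $\liminf_{t\to\infty}I(t)>\eta$ whenever $I(0)>0$. The persistence set is compact, forward invariant and bounded away from $\{I=0\}$, so Brouwer applied to $\Phi$ (or an iterate, which accounts for the sub-harmonics mentioned in the statement) yields an endemic fixed point, hence at least one periodic orbit $(S_i,I_i)$ with $I_i>0$ (case~2).

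For (c), write \eqref{system} in integral form and subtract the equations for the trajectories driven by $(p_n,\beta_m)$ and by $(p,\beta)$; using the uniform $L^\infty$ bounds provided by $\Delta^2$ one obtains
\[
|S^{n,m}(t)-S(t)|+|I^{n,m}(t)-I(t)|\le C\!\int_0^t\!\bigl(|S^{n,m}-S|+|I^{n,m}-I|\bigr)(s)\,\rd s+\eps_{n,m}(t),
\]
where $\eps_{n,m}(t)$ collects terms $\int_0^t(p_n-p)S\,\rd s$ and $\int_0^t(\beta_m-\beta)IS\,\rd s$ evaluated on the limit trajectory. Since $S$ and $IS$ are continuous on $[0,T]$, the hypothesis of weak convergence as measures forces $\eps_{n,m}(t)\to 0$ pointwise, and uniform integrability upgrades this to uniform convergence via Vitali's theorem. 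Gronwall then yields uniform convergence of trajectories on $[0,T]$, so the Poincaré maps $\Phi_{p_n,\beta_m}$ converge to $\Phi_{p,\beta}$ in $C^0(\Delta^2)$ and their fixed points converge (up to extraction of subsequences, in Hausdorff distance on the fixed-point set) by the standard compactness/continuity argument.

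The step I expect to be the main obstacle is the persistence alternative in (b): one has to verify dissipativity of $\Phi$, isolation of the boundary fixed point, and acyclicity of the chain-recurrent set on $\{I=0\}$ before invoking a uniform persistence theorem for periodic semiflows, and then transfer the discrete-time persistence of $\Phi$ to the continuous-time lower bound $\liminf I(t)>\eta$. Once these hypotheses are in place, the rest of the argument reduces to classical Floquet, Gronwall and Brouwer fixed-point reasoning.
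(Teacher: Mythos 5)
Your overall architecture matches the paper's. The paper reduces the first part to the framework of \citet{rebelo2012} (verifying their hypotheses in an appendix, including essentially your Floquet computation on the invariant set $\{I=0\}$ and the same threshold quantity --- your $\lambda_I\le 1$ is exactly $\mathcal{R}_0=\langle\beta S_0\rangle/(\gamma+\mu)\le 1$), obtains uniform persistence from their Theorem~2 when $\mathcal{R}_0>1$, gets the interior periodic orbit from the Poincar\'e map via \citet[theorem~2.1]{zhao1995}, and delegates the continuous-dependence statement entirely to \citet[theorem~2.1]{heunis1984}. Your part (c) is a reasonable sketch of how that cited theorem is proved, though the upgrade from pointwise to uniform convergence of $t\mapsto\int_0^t(p_n-p)S\,\rd s$ is really equicontinuity (from uniform integrability) plus Arzel\`a--Ascoli rather than Vitali; the mechanism is right.

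The one step that does not work as written is the final fixed-point argument in (b): the compact forward-invariant persistence set (equivalently, the global attractor of $\Phi$ in $M_0=\{I\neq 0\}$) is not convex, so Brouwer's theorem does not apply to it directly, and $\Phi$ need not map its convex hull into itself. Producing a fixed point inside that attractor is precisely the nontrivial content of the result the paper invokes, \citet[theorem~2.1]{zhao1995} (see also \citet{zhao2008}, ``permanence implies the existence of interior periodic solutions''), whose proof rests on an asymptotic fixed-point theorem of Horn/Hale--Lopes type rather than Brouwer; you should either cite such a theorem or supply that argument. A smaller issue: your Gronwall argument for case~1 only closes when $\lambda_I<1$, since at the critical value $\lambda_I=1$ the bound $S\le S_0+\eps$ yields a per-period growth factor $\lambda_I\,\e^{\eps\langle\beta\rangle T}>1$; to cover the borderline case one needs the strict decay of $S-S_0$ forced by $I>0$, via $(S-S_0)'=-\alpha I-\beta IS-(p+\mu+\alpha)(S-S_0)$, as in the comparison estimates of the paper's appendix.
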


\begin{proof}
For the first part, see~\citet{rebelo2012}; see Appendix~\ref{ap:proof1} for further details. The second part follows from~\citet[theorem 2.1]{heunis1984}. 
\end{proof}

\begin{remark}
 The assumptions on $\beta,p $ in the previous lemma are extremely restrictive and used only for the first part of the result (existence of disease free solution and periodicity of the endemic solution). 
 If we relax our assumptions to require only that $\beta$ is of bounded variation and $p$ is a measurable function, then existence of solutions (not necessarily periodic) and convergence of solutions (as in the second part of Lemma~\ref{lem:sol_per}) is guaranteed by~\citep{heunis1984}. This will be explored in the examples. Note that if $\beta_n,p_n$ are continuous then, $\beta\bydef\lim\beta_n$ and $p\bydef\lim p_n$ necessarily satisfy these more relaxed assumptions. 
\end{remark}

For constant transmission and vaccination, we  establish the following result, that is going to motivate our definition of optimal vaccination.

\begin{lemma}\label{lem:equiv}
  Consider that $\beta(t)=\beta_0>0$ and $p(t)=p_0\ge 0$.  The only stationary disease free solution of  system~\eqref{system} is given by $\hat S_0=\hat S_0[p]=S_0[p](t)=\frac{\mu+\alpha}{p_0+\mu+\alpha}\le 1$. Furthermore, the three conditions below are equivalent:
 \begin{enumerate}
  \item\label{lem:equiv_a} $\frac{\beta_0\hat S_0}{\gamma+\mu} \leq 1$.
  \item\label{lem:equiv_b} The disease free solution $(\hat S_0,0)$ is globally asymptotically stable.
  \item\label{lem:equiv_c} $I'<0$ for all $I>0$ and all $S<\hat S_0$.
 \end{enumerate}
  \end{lemma}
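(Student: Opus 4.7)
The plan is to first derive $\hat S_0$, then dispatch (1) $\iff$ (3) by inspection of the $I$-equation, and finally close (1) $\iff$ (2) using Lemma~\ref{lem:sol_per} together with an averaging identity that rules out endemic periodic orbits under~(1).

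\emph{Computing $\hat S_0$ and (1) $\iff$ (3).} Substituting $I\equiv 0$ into \eqref{systemS} reduces it to the affine scalar ODE $S' = (p_0+\mu+\alpha)(\hat S_0 - S)$ with $\hat S_0 = (\mu+\alpha)/(p_0+\mu+\alpha)\in(0,1]$, which has $\hat S_0$ as its unique and globally attracting equilibrium on $\{I=0\}$. The first equivalence is then immediate: \eqref{systemI} specializes to $I' = (\beta_0 S - \gamma - \mu) I$, so for $I > 0$ the sign of $I'$ agrees with that of $\beta_0 S - (\gamma+\mu)$, and requiring this to be negative for every $S < \hat S_0$ is equivalent, by continuity in $S$, to $\beta_0\hat S_0 \leq \gamma+\mu$, i.e.\ to~(1).

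\emph{(1) $\iff$ (2) via Lemma~\ref{lem:sol_per} and averaging.} Since $\beta$ and $p$ are constants, the dichotomy of Lemma~\ref{lem:sol_per} reduces to either the DFE being globally stable or the existence of an endemic $T$-periodic solution $(S,I)$ with $I(t) > 0$. Under (1), I would rule out the latter by averaging over a period: the identity $0 = \int_0^T (I'/I)\,\rd t = \int_0^T [\beta_0 S(t) - (\gamma+\mu)]\,\rd t$ forces $\overline{S} := T^{-1}\int_0^T S(t)\,\rd t = (\gamma+\mu)/\beta_0$, while integrating \eqref{systemS} over $[0,T]$ gives $(p_0+\mu+\alpha)(\hat S_0 - \overline{S}) = T^{-1}\int_0^T (\alpha + \beta_0 S(t)) I(t)\,\rd t > 0$; hence $\overline{S} < \hat S_0 \leq (\gamma+\mu)/\beta_0$, contradicting $\overline{S} = (\gamma+\mu)/\beta_0$. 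Conversely, if (1) fails, $\beta_0\hat S_0 - \gamma - \mu > 0$ is the linearized growth rate of $I$ at the DFE, so the DFE is linearly unstable and (2) fails.

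\emph{Main obstacle.} The subtlety is that the second alternative of Lemma~\ref{lem:sol_per} allows any endemic $T$-periodic orbit, not merely stationary ones, so a direct ``no endemic equilibrium'' check on the vector field would be insufficient; the averaging identity above resolves this uniformly in the shape of the orbit. It also absorbs the knife-edge case $\beta_0\hat S_0 = \gamma+\mu$ without a separate treatment, since $\alpha > 0$ guarantees $\int_0^T (\alpha + \beta_0 S) I\,\rd t > 0$ as long as $I \not\equiv 0$, preserving the strict inequality $\overline{S} < \hat S_0$ even in that critical regime.
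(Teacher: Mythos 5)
Your derivation of $\hat S_0$, the equivalence (1)$\iff$(3), and the implication (2)$\Rightarrow$(1) via the linearization all coincide with the paper's argument. Where you genuinely diverge is in (1)$\Rightarrow$(2): the paper gives a self-contained proof by exhibiting the Lyapunov function $V(S,I)=\tfrac12(S-\hat S_0+I)^2+aI$ with $a=\frac{2(\mu+\alpha)+\gamma+p_0}{\beta_0}$, computing $V'\le 0$ on the relevant region, and invoking LaSalle's invariance principle; you instead lean on the dichotomy of Lemma~\ref{lem:sol_per} and exclude the endemic alternative by the averaging identities $\overline{S}=(\gamma+\mu)/\beta_0$ and $(p_0+\mu+\alpha)(\hat S_0-\overline{S})=\alpha\overline{I}+\beta_0\overline{IS}>0$. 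Your averaging computation is correct, and it is an elegant way to handle arbitrary periodic orbits (including the knife-edge $\beta_0\hat S_0=\gamma+\mu$) uniformly. What the paper's route buys is independence from the persistence machinery: the Lyapunov argument establishes global asymptotic stability directly, without assuming that ``no endemic periodic orbit'' exhausts the complement of ``globally stable.'' Your route is shorter but inherits whatever is delicate in Lemma~\ref{lem:sol_per}: the appendix proof of that lemma establishes extinction for $\mathcal{R}_0<1$ and uniform persistence for $\mathcal{R}_0>1$, but does not explicitly treat $\mathcal{R}_0=1$, which is precisely the threshold case your argument must cover when $\beta_0\hat S_0=\gamma+\mu$. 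If you want your proof to stand on its own at that threshold, you should either verify that the dichotomy is exhaustive there (e.g., by a Poincar\'e--Bendixson argument in the planar autonomous setting, noting your averaging identity also rules out interior equilibria and hence closed orbits and separatrix cycles) or fall back on a direct Lyapunov argument as the paper does.
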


  \begin{proof}
  	
  	\ref{lem:equiv_a}$\Leftrightarrow$\ref{lem:equiv_c}. Note that $I'=I(\gamma+\mu)\left(\frac{\beta_0 S}{\gamma+\mu}-1\right)$, and therefore, assuming $I>0$, $I'<0$ if and only if $S<\frac{\gamma+\mu}{\beta_0}$. 
  	
  	\qquad\ref{lem:equiv_a}$\Rightarrow$\ref{lem:equiv_c}. $S<\hat S_0\le\frac{\gamma+\mu}{\beta_0}$ and then $I'<0$. 
  	
  	\qquad\ref{lem:equiv_c}$\Rightarrow$\ref{lem:equiv_a}. $I'<0$ if and only if $S<\frac{\gamma+\mu}{\beta_0}$ and therefore $\hat S_0\leq\frac{\gamma+\mu}{\beta_0}$.
  	
	\ref{lem:equiv_a}$\Leftrightarrow$\ref{lem:equiv_b}.
	
  	\qquad\ref{lem:equiv_a}$\Rightarrow$\ref{lem:equiv_b}.
We follow ideas from~\citep{Cruz}; see also~\citep{Capasso} for other examples of use of Lyapunov functions in Mathematical Epidemiology. 

Let us define 
  	\[
 	V(S,I)=\frac{1}{2}\left(S-\hat S_0+I\right)^2+ a I, \quad a=\frac{2(\mu+\alpha)+\gamma +p_0}{\beta_0}.
  	\]
	We differentiate $V$ with respect to $t$ and obtain
  	\[
  	\begin{split}
  	V'(S,I)=\frac{dV}{dt}=&\left(S-\hat S_0+I\right)(S'+I')+a I'\\
  	=&\left(S-\hat S_0+I\right)( (\hat S_0-S)(p_0+\mu+\alpha) -(\gamma +\alpha+ \mu) I)+a I'\\
  	=&-\left(S-\hat S_0\right)^2(p_0+\mu+\alpha)-I^2(\gamma +\alpha+ \mu)\\&- (S-\hat S_0)I(2(\mu+\alpha)+\gamma +p_0) +a( \beta_0 IS-(\gamma+ \mu)) I.\\
	=&-\left(S-\hat S_0\right)^2(p_0+\mu+\alpha)-I^2(\gamma +\alpha+ \mu)\\&-a \beta_0 I \left(\frac{\gamma+ \mu}{\beta_0}-\hat S_0\right).\\
 	\end{split}
 	\]
  	Let $G=\{(S,I)\in[0,1]\times (0,1]|S+I\le 1\}$. Note that $V'$ is a continuous function in $G$ and, by Condition~\ref{lem:equiv_a}, it is negative. It is immediate to verify that $V$ is a Lyapunov function in $G$. Let $\bar{G}$ be the closure of $G$. As  $\{(S,I) \in \bar{G}| V'(S,I)=0\}$ is the singleton with the equilibrium point $\{(\hat S_0,0)\}$ we conclude from \citet[Corollary~1.2 in Chapter~X]{Hale} that $(\hat S_0,0)$ is globally asymptotically stable.

 	\qquad\ref{lem:equiv_b}$\Rightarrow$\ref{lem:equiv_a}. After system linearisation around the disease free solution $(\hat S_0,0)$, we find the Jacobian matrix 
  	\[
  	\left(\begin{matrix} -p_0-\mu-\alpha&-\alpha-\beta_0\hat S_0\\0&(\gamma+\mu)\left(\frac{\beta_0\hat S_0}{\gamma+\mu}-1\right)\end{matrix}\right)\ .
  	\]
  	If $\frac{\beta_0\hat S_0}{\gamma+\mu} > 1$, the equilibrium would not be stable, which leads to a contradiction.

  \end{proof}

From the above lemma, we recover the effective reproductive number for the constant parameter case, $\mathcal{R}_0\bydef\mathcal{R}_{0}[p]\bydef\frac{\beta_0\hat S_0[p]}{\gamma+\mu}$. Condition $\mathcal{R}_{0}\leq 1$ guarantees at the same time that all epidemics will be eventually extinct and that $I(t)$ decreases monotonically in time, from $I(0)>0$.

However, in the time dependent case (in particular in the periodic case), these two phenomena are not equivalent. In general, even for linear systems, it is possible that before being attracted to an asymptotic equilibrium, the trajectory of $(S(t),I(t))$ drifts away from this equilibrium~\citep{hastings2010,hastings2004}.

For the periodic case, we can compute the effective reproduction number, following~\citep{Thieme_2000} (see also~\citep{wang2008}), as 
\begin{equation}\label{def:R0}
\mathcal{R}_0\bydef\frac{1}{\gamma+\mu}\langle\beta S_0\rangle=\frac{1}{T(\gamma+\mu)}\int_0^T\beta(t)S_0(t)\rd t\ .
\end{equation}
Note that, for the periodic case,  condition  $\mathcal{R}_0<1$  still guarantees  asymptotic stability of the disease free case \citep{wang2008}, but does not necessarily  prevent the existence of outbreaks; see, e.g., \citep{zhao2008}.

In this work, we will look for conditions that generalize, for time-dependent parameters, Condition~\ref{lem:equiv_c} in Lemma~\ref{lem:equiv}, i.e., that guarantees that the number of infectious is monotonically decreasing for small perturbations of the disease free solution. From the modelling point of view, no particular definition can be considered better than the other; in fact, for certain particular diseases (e.g., polio, tuberculosis) vaccination policy aims to eradicate/eliminate the disease in the long run, while for other diseases, governments act to prevent the existence of large outbreaks (e.g., influenza, cholera)~\citep{WHO2015}. Our approach describes better this second setting.\\

 From now on, we assume that, for a given vaccination strategy $p(t)$,  system $(S(t),I(t))$ is in its stationary (periodic) state, and we will consider two different cases:
 \begin{enumerate}
 \renewcommand{\theenumi}{\arabic{enumi}}
\renewcommand{\labelenumi}{(C\theenumi)}
 \item
 \label{case1} The disease free state $(S_0[p](t),0)$;
  \item
\label{case2} A certain endemic state $(S_1[p](t),I_1[p](t))$. (There is no uniqueness for the endemic state; for the sake of simplicity, we will consider from now on only one endemic solution. There is no essential change if we consider more than one.)
 \end{enumerate}
Both solutions are assumed to be periodic, possibly with period multiple of $T$; however, without loss of generality, we will consider the period given by $T$. Note that for a different set of parameters more complicated behaviour (possible chaotic) can be found, cf. \citep{Kuznetsov_Piccardi}.

\section{Vaccination strategies}\label{sec:vaccinations}

In this section we will consider two types of vaccination: mandatory and voluntary vaccination. For each one, we will define one special case: for the former, an optimal vaccination is defined as one vaccination strategy that is  able to prevent outbreaks while having the minimum number of vaccinations possible and for the latter, a Nash vaccination strategy is defined as a strategy in which all individuals in population minimize the joint risk of both disease and vaccine.

\subsection{Optimal vaccination}

For the optimal vaccination, we choose to work with  a generalization of Condition~\ref{lem:equiv_c} in Lemma~\ref{lem:equiv}. More specifically, we say that a certain vaccination strategy $p$ is a {\it preventive strategy}  when the fraction of individuals in the class $I$ decreases monotonically in time for any  small enough perturbation of the disease free state. We then construct the optimal vaccination strategy as the limit of the preventive strategies for which the vaccination effort is minimized. In \citep{onyango2014}, optimality for time dependent vaccination profiles is defined based on the effective reproductive number. Note that in our model, only susceptibles are vaccinated, which implies a full knowledge of the current status of an individual.

\begin{definition}\label{def:eff}
We define the \emph{vaccination effort} associated to a given strategy $p$, as the average number of vaccinations in one period, i.e. $\mathbb{E}[p]\bydef\langle p S[p]\rangle\bydef \frac{1}{T}\int_0^T p(t)S[p](t)\rd t$, where $\langle\cdot\rangle$ denotes the average in one period, and $S$ is the relevant solution, given by (C\ref{case1}) or (C\ref{case2}), defined above. 
\end{definition}

We denote a cumulative distribution function, associated with $p$, by $\PD(t)=\int_0^t\rd p$, or in a more relaxed notation $\rd\PD(t)=p(t)\rd t$. To simplify the notation, we will use indistinctly $\rd\PD$ and $p\rd t$, whenever there is no risk of confusion. Therefore, we now write $\mathbb{E}[p]=\frac{1}{T}\int_0^TS(\tau)\rd\PD(\tau)$.
For technical reasons, we need to consider bounds in the set of vaccination profiles. 
More precisely:

\begin{definition}\label{deff:admissible}
We say that a certain vaccination function  $p$ is \emph{admissible} if its cumulative distribution is such that
\begin{equation}\label{eq:hip_p}
\PD([0,T])=\int_{0}^T\rd \PD\le (\mu+\alpha\ )T\frac{\bar\beta}{\gamma+\mu},
\end{equation}
where $\bar\beta=\sup_{t\in[0,T]}\beta(t)$.
Furthermore, we use $\radon$ to denote the set of non-negative Radon measures $\PD$ in $[0,T]$ such that $\PD([0,T])\le (\mu+\alpha\ )T\frac{\bar\beta}{\gamma+\mu}$ and $C_+([0,T])$ to denote the set of continuous functions in $[0,T]$, with $p(0)=p(T)$ such that $\int_0^Tp(t)\rd t\le (\mu+\alpha)T\frac{\bar\beta}{\gamma+\alpha}$. We also consider the natural immersion $C_+([0,T])\subset\radon$. 
 \end{definition}

Now we show that the definition of vaccination effort can be extend continuously for the case of Radon measures.

\begin{lemma}\label{lem:effort_measure}
 Let $p_n\in C_+([0,T])$ be such that $p_n\to p\in\radon$, in the weak topology, cf.~\citep{koralov2007}, and let $\mathbb{E}[p]\bydef\lim\mathbb{E}[p_n]$. Then, $\mathbb{E}[p]$ is independent of the choice of the sequence $p_n$.
\end{lemma}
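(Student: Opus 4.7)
The plan is to show that $\lim \mathbb{E}[p_n]$ equals the explicit value $\frac{1}{T}\int_0^T S[p](\tau)\,\rd\PD(\tau)$, which depends only on $p$ and not on the approximating sequence. The key ingredients are already in hand: Lemma~\ref{lem:sol_per} (together with the remark extending it to measure-valued $p$) says that weak convergence $p_n \to p$, combined with uniform integrability, forces $S[p_n] \to S[p]$ uniformly on $[0,T]$. In particular $S[p]$ is continuous, being a uniform limit of continuous functions.

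The main step will be the standard splitting
\begin{equation*}
\int_0^T S[p_n](\tau)\,\rd\PD_n(\tau) - \int_0^T S[p](\tau)\,\rd\PD(\tau)
= \underbrace{\int_0^T \bigl(S[p_n]-S[p]\bigr)\,\rd\PD_n}_{A_n} + \underbrace{\int_0^T S[p]\,\rd\PD_n - \int_0^T S[p]\,\rd\PD}_{B_n}.
\end{equation*}
I would then control each piece separately. For $A_n$, the estimate
\[
|A_n| \leq \|S[p_n]-S[p]\|_\infty \cdot \PD_n([0,T])
\]
combined with the uniform bound $\PD_n([0,T]) \leq (\mu+\alpha)T\bar\beta/(\gamma+\mu)$ from Definition~\ref{deff:admissible} and the uniform convergence $S[p_n] \to S[p]$ gives $A_n \to 0$. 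For $B_n$, since $S[p]$ is continuous, the weak convergence $\PD_n \to \PD$ directly yields $B_n \to 0$ by definition of the weak topology on $\radon$.

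Combining these two estimates yields
\[
\lim_{n\to\infty}\mathbb{E}[p_n] = \frac{1}{T}\int_0^T S[p](\tau)\,\rd\PD(\tau),
\]
and since the right-hand side is expressed only in terms of the limit $p$, the value is independent of the approximating sequence, which is the claim.

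The main obstacle I anticipate is justifying the uniform convergence $S[p_n] \to S[p]$ when the limit $p$ is merely a Radon measure, since \emph{a priori} the system \eqref{system} is formulated for functional $p$. This is precisely why the remark following Lemma~\ref{lem:sol_per} is invoked: one interprets $p(t)\,\rd t$ as $\rd\PD(t)$ in \eqref{systemS}--\eqref{systemR}, and the Heunis-type continuous dependence result applies to the sequence $\PD_n$ viewed as bounded measures converging weakly to $\PD$. The uniform integrability hypothesis needed there is automatic because all $\PD_n$ are uniformly bounded in total variation by Definition~\ref{deff:admissible}. Once this is secured, the continuity of $S[p]$ (needed to apply the weak-convergence property to the test function $S[p]$) comes for free as a uniform limit of continuous functions.
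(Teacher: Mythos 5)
Your splitting is structurally the same as the paper's, but you aim at a strictly stronger conclusion: you identify the limit as $\frac{1}{T}\int_0^T S[p]\,\rd\PD$, which forces you to make sense of $S[p]$ for a genuine Radon measure $p$ and to prove uniform convergence $S[p_n]\to S[p]$ to a \emph{continuous} limit. The paper avoids this entirely: it takes two approximating sequences $p_n,q_n\to p$, writes $\mathbb{E}[p_n]-\mathbb{E}[q_n]=\int_0^T S[p_n]\,\rd(\PD_n-\mathsf{Q}_n)+\int_0^T(S[p_n]-S[q_n])\,\rd\mathsf{Q}_n$, and shows this difference tends to zero using only boundedness of $S[p_n]$ and the continuity of $p\mapsto S[p]$ along sequences of continuous functions; it never evaluates $S$ at the limiting measure.

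The genuine gap in your version is the claim that uniform integrability of $(p_n)$ ``is automatic because all $\PD_n$ are uniformly bounded in total variation.'' This is false: boundedness in total variation (equivalently, in $L^1$) does not imply uniform integrability --- by Dunford--Pettis, uniform integrability is equivalent to the weak limit being again an absolutely continuous measure. The lemma is precisely designed to cover limits that are \emph{not} functions: in the critical example of Section~4.3 the optimal strategy is $p_\opt=\Gamma\sum_i\delta_{2\pi i}$, obtained as a weak limit of a delta-sequence $p_\eps$, which is $L^1$-bounded but not uniformly integrable. For such limits the hypothesis of the continuous-dependence statement in Lemma~\ref{lem:sol_per} is not met, $S[p]$ has a jump at the atom of $\PD$, and both of your estimates fail: $A_n\to 0$ needs $\|S[p_n]-S[p]\|_\infty\to 0$, which cannot hold when the limit is discontinuous, and $B_n\to 0$ needs $S[p]$ to be an admissible test function for weak convergence, i.e.\ continuous --- and the discontinuity of $S[p]$ sits exactly at the atom of $\PD$, where weak convergence of $\PD_n$ gives no control. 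To repair the argument you would either have to restrict to non-atomic limits, or retreat to the paper's two-sequence formulation, which only requires $S[p_n]-S[q_n]\to 0$ and sidesteps the definition of $S$ at a measure.
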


\begin{proof}
Let $p_n,q_n\in C_+([0,T])$ such that $p_n,q_n\to p\in\radon$. Let $\PD_n$ and $\mathsf{Q}_n$ be the associated cumulative distribution functions, respectively. Note that  
 \begin{align*}
  \mathbb{E}[p_n]-\mathbb{E}[q_n]&=\int_0^T \left(S[p_n]\rd \PD_n-S[q_n]\rd \mathsf{Q}_n\right)\\
  &=\int_0^T S[p_n]\rd\left(\mathsf{P}_n-\mathsf{Q}_n\right)+\int_0^T\left(S[p_n]-S[q_n]\right)\rd \mathsf{Q}_n\ .
 \end{align*}
 From the fact that $S[p_n]$ is bounded and $\mathsf{P}_n-\mathsf{Q}_n\to 0$, we conclude that the first integral converges to 0. 
 For the second integral, the convergence to 0 follows from the continuity of $p\mapsto S[p]$ in the appropriate topology. See~\citep[theorem 2.1]{heunis1984} for further details.
\end{proof}

\begin{definition}\label{deff:preventive}
Let $\beta\in C([0,T])$, $\beta(0)=\beta(T)$ be given.
 For a given vaccination strategy $p$, let $(S_0(t),0)=(S_0[p](t),0)$ be the disease-free 
 solution of System~(\ref{system}).  
We say that  $p$ is a \emph{preventive strategy} if $\beta(t)S_0[p](t)< \gamma+\mu$ for all $t$. We call $\strat_{\mathrm{p}}=\strat_{\mathrm{p}}[\beta]$ the set of admissible strategies that are preventive, i.e,
$
\strat_{\mathrm{p}}=\{p \in C_+([0,T]),\ \text{with}\ p(0)=p(T),\ \text{and}\  \beta(t)S_0[p](t)< \gamma+\mu,$ $ \textrm{ for all } t \in [0,T]\}.
$
 
\end{definition}

Now, we analyse the preventive strategies. First, we explicitly characterize the disease free state and then we show the existence of at least one preventive strategy. Afterwards, we define the concept of optimal strategy. Here, we reproduce the result from~\citep[Theorem 3.7]{Thieme_2003}.

 \begin{lemma}\label{lem:S00}
Let $S_0[p](t)$ be the time dependent periodic number of susceptibles in the unique disease free state of system~\eqref{system}. Then
  \[
   S_0[p](0)=\frac{(\mu+\alpha)\int_0^T\e^{-\int_s^T(p+\mu+\alpha)(\tau)\rd\tau}\rd s}{1-\e^{-\int_0^T(p+\mu+\alpha)(\tau)\rd\tau}}\ .
  \]
 \end{lemma}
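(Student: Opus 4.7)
The plan is straightforward: once we set $I\equiv 0$, equation~\eqref{systemS} collapses to a scalar linear first-order ODE with (possibly non-smooth) periodic coefficient
\[
S'(t) + \bigl(p(t)+\mu+\alpha\bigr)S(t) = \mu+\alpha,
\]
so all we need to do is integrate it explicitly and then impose the periodicity condition $S_0[p](0)=S_0[p](T)$ to pin down the initial value.

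First I would introduce the integrating factor $\exp\bigl(\int_0^t(p+\mu+\alpha)(\tau)\,\rd\tau\bigr)$, write the equation as $\frac{\rd}{\rd t}\bigl(\e^{\int_0^t(p+\mu+\alpha)}S\bigr)=(\mu+\alpha)\e^{\int_0^t(p+\mu+\alpha)}$, and integrate from $0$ to $t$. This yields
\[
S(t)=S(0)\,\e^{-\int_0^t(p+\mu+\alpha)(\tau)\rd\tau} + (\mu+\alpha)\int_0^t\e^{-\int_s^t(p+\mu+\alpha)(\tau)\rd\tau}\rd s.
\]
Evaluating at $t=T$ and imposing $S(T)=S(0)$ (which is legitimate because Lemma~\ref{lem:sol_per} already guarantees existence and uniqueness of the disease-free periodic orbit) gives
\[
S(0)\bigl(1-\e^{-\int_0^T(p+\mu+\alpha)(\tau)\rd\tau}\bigr)=(\mu+\alpha)\int_0^T\e^{-\int_s^T(p+\mu+\alpha)(\tau)\rd\tau}\rd s,
\]
which, after dividing, is precisely the stated formula. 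Note that the denominator is strictly positive since $\mu+\alpha>0$, so no divisibility issue arises.

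There is essentially no hard step here: everything reduces to the standard Duhamel/variation-of-constants formula for a scalar linear ODE, plus the linear algebraic equation coming from the period-$T$ boundary condition. The only minor observation worth recording is that the map $p\mapsto S_0[p]$ given by this formula is manifestly continuous in $p$ for weak convergence plus uniform integrability (so it fits into the framework of Lemma~\ref{lem:sol_per}), and that $S_0[p](0)\in(0,1]$, consistent with admissibility. Having fixed $S_0[p](0)$, the full profile $S_0[p](t)$ is then recovered by the Duhamel formula above with $t\in[0,T]$, and it is periodic by construction.
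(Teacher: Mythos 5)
Your derivation is correct and complete: with $I\equiv 0$ the equation for $S$ is linear, the Duhamel formula plus the periodicity condition $S(0)=S(T)$ yields exactly the stated expression, and the denominator is positive because $\mu+\alpha>0$. The paper gives no proof of its own for this lemma (it only cites Thieme 2003, Theorem 3.7), and your variation-of-constants argument is precisely the standard computation behind that citation, so there is nothing to reconcile.
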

 
Before looking for optimal strategies, we prove that the set of preventive strategies is not empty.

\begin{lemma}\label{lem:nonempty}
For any choice of parameters, there exists at least one preventive strategy, i.e.,
 $\strat_{\mathrm{p}}[\beta]\ne \emptyset$ for all $\beta\in C([0,T])$, $\beta(0)=\beta(T)$.
\end{lemma}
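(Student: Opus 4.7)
The plan is to exhibit an explicit admissible preventive strategy. A constant vaccination rate is the simplest candidate, so I will take $p \equiv p_0$ for a suitably chosen constant $p_0$ and check both admissibility and the preventive condition.

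First, I would set
\[
p_0 \bydef \frac{(\mu+\alpha)\bar\beta}{\gamma+\mu},
\]
viewed as a constant function on $[0,T]$. Clearly $p_0 \in C_+([0,T])$ with $p_0(0)=p_0(T)$, and $\int_0^T p_0\,\rd t = p_0 T = (\mu+\alpha)T\bar\beta/(\gamma+\mu)$, so the admissibility bound is saturated (and in particular met). If $\bar\beta=0$ the preventive condition $\beta(t)S_0[p](t)<\gamma+\mu$ is trivial for any admissible $p$, so I may assume $\bar\beta>0$.

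Next I would compute $S_0[p_0]$. Since $p_0$ is constant, the disease-free equation reduces to the autonomous linear ODE
\[
S_0' = \mu+\alpha - (p_0+\mu+\alpha)S_0,
\]
whose unique periodic solution is the constant equilibrium $\hat S_0 = (\mu+\alpha)/(p_0+\mu+\alpha)$. (This also follows from the explicit formula in Lemma \ref{lem:S00}.) Substituting the chosen $p_0$ gives
\[
\hat S_0 = \frac{\mu+\alpha}{(\mu+\alpha)\bar\beta/(\gamma+\mu)+\mu+\alpha}=\frac{\gamma+\mu}{\bar\beta+\gamma+\mu}.
\]

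Finally I would verify the preventive inequality pointwise. For every $t\in[0,T]$,
\[
\beta(t)\,S_0[p_0](t) \le \bar\beta\cdot\frac{\gamma+\mu}{\bar\beta+\gamma+\mu} = (\gamma+\mu)\cdot\frac{\bar\beta}{\bar\beta+\gamma+\mu} < \gamma+\mu,
\]
the strict inequality holding because $\gamma+\mu>0$. Hence $p_0\in\strat_{\mathrm p}[\beta]$ and the set is nonempty.

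There is no real obstacle: the only slightly delicate point is checking that the admissibility bound $(\mu+\alpha)T\bar\beta/(\gamma+\mu)$ was deliberately chosen to be exactly the threshold at which a constant strategy first becomes preventive, which is why saturating it yields a witness. If one preferred a strict inequality for admissibility, one could instead take $p_0$ marginally smaller, noting that the set $\{p_0:\bar\beta\hat S_0<\gamma+\mu\}$ is an open half-line $p_0>(\mu+\alpha)(\bar\beta/(\gamma+\mu)-1)$, which intersects the admissible range in an interval of length $\mu+\alpha>0$.
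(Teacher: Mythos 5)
Your proof is correct and follows essentially the same route as the paper's: both exhibit a constant strategy $p_0$, identify the constant disease-free equilibrium $\hat S_0=(\mu+\alpha)/(p_0+\mu+\alpha)$, and check $\bar\beta\hat S_0<\gamma+\mu$. The only difference is that you pin down the specific value $p_0=(\mu+\alpha)\bar\beta/(\gamma+\mu)$ and explicitly verify admissibility (a point the paper's proof leaves implicit), whereas the paper takes any $p_0>(\mu+\alpha)\bigl(\bar\beta/(\gamma+\mu)-1\bigr)$.
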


\begin{proof}
Given $\beta$, assume $p(t)=p_0>(\mu+\alpha)\left(\frac{\bar\beta}{\gamma+\mu}-1\right)$, constant, corresponding to a preventive strategy in the case of the maximum transmission rate.
Note that $S_0[p](t)=\frac{\mu+\alpha}{p_0+\mu+\alpha}$ is the only stationary solution of the Equation~(\ref{systemS}) with $I(t)=0$ and is, additionally, the solution obtained from the initial condition given by Lemma~\ref{lem:S00}. From the definition of $p_0$, we conclude that $\beta(t)S_0[p](t)<\gamma+\mu$ for all $t$, and then $\{p(t)=p_0\}\in\strat_{\mathrm{p}}$. 
\end{proof}

Finally, we construct the {\it optimal vaccination strategy}  as one strategy that can be arbitrarily approximated by a preventive strategy and such that its associated effort is never superior to the effort of any preventive strategy.
More rigorously, we define an optimal vaccination strategy by
\begin{definition}\label{def:popt}
 Let $\beta\in C([0,T])$, $\beta(0)=\beta(T)$ be given. We say that a given strategy $p_\opt=p_\opt[\beta]$ is optimal if the following conditions are simultaneously satisfied:
 \begin{enumerate}
  \item there is at least one sequence $\strat_{\mathrm{p}}[\beta]\ni p_n\to p_\opt$ (in measure).
 \item for any $p\in\strat_{\mathrm{p}}[\beta]$, $\mathbb{E}[p]\ge\mathbb{E}[p_\opt]$.
 \end{enumerate}
 \end{definition}

 Whenever $p_\opt$ is a function (as discussed in Subsection~\ref{sec:regular}), Definition~\ref{def:popt} means simply that there is a sequence of preventive strategies that converge to $p_\opt$ and that $p_\opt$ minimizes the vaccination effort in the closure of the set $\strat_{\mathrm{p}}$.

\subsection{Rational vaccination}\label{ssec:rational}

In this subsection, we study a population of rational individuals and how their decisions influence the disease dynamics.

For each focal individual the probability of getting the disease is assumed to depend on the disease incidence for each time step. The following lemma shows how this probability can be computed from the model.

\begin{lemma}
The probability that a susceptible non-vaccinated individual at time $t$ gets the disease between times $t$ and $t+\Delta t$, for $\Delta t$ sufficiently small, is given by $\beta(t)I(t)\Delta t+\mathrm{o}\left(\Delta t\right)$.
\end{lemma}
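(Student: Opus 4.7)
The plan is to read the per-capita infection hazard directly off the SIRS equations and then compute the infection probability via a competing-risks calculation, expanded to first order in $\Delta t$. Concretely, the bilinear term $-\beta(t)IS$ in \eqref{systemS} is, in the standard interpretation of such compartmental models, the flux (in units of population fraction) from the susceptible class into the infectious class; dividing by the density of susceptibles identifies $\beta(t)I(t)$ as the instantaneous per-capita rate at which a single susceptible becomes infected at time $t$. While still susceptible and non-vaccinated, the focal individual also faces two competing exits: vaccination at rate $p(t)$ and natural mortality at rate $\mu$.

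Setting $\lambda(s)\bydef\beta(s)I(s)+p(s)+\mu$, I would express the probability that the focal individual is alive, susceptible and non-vaccinated at time $s\ge t$ as $\exp(-\int_t^s\lambda(u)\,\rd u)$, so that the probability of being infected somewhere in $[t,t+\Delta t]$ reads
$$
\int_t^{t+\Delta t}\beta(s)I(s)\,\e^{-\int_t^s\lambda(u)\,\rd u}\,\rd s.
$$
Because $\beta$, $I$ and $p$ are continuous on $[0,T]$, for $s\in[t,t+\Delta t]$ we have $\beta(s)I(s)=\beta(t)I(t)+\mathrm{o}(1)$ while the exponential factor is $1+O(\Delta t)$ uniformly in $s$; multiplying out and integrating gives $\beta(t)I(t)\Delta t+\mathrm{o}(\Delta t)$, which is the claim.

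The main (mild) obstacle is really a choice of formalism rather than a hard estimate: the SIRS system is deterministic, so to speak of the \emph{probability} that a focal individual gets infected one must first commit explicitly to the standard stochastic interpretation in which \eqref{system} arises as a large-population mean-field limit of independent Poisson-type clocks driving the $S\to I$, $S\to R$ and death transitions. Once that interpretation is in place, the contributions of the vaccination and mortality clocks are automatically absorbed into the $\mathrm{o}(\Delta t)$ remainder. A purely deterministic alternative, which avoids any probabilistic machinery, is to note that by definition of a derivative one has $S(t+\Delta t)-S(t)=-\beta(t)I(t)S(t)\Delta t-p(t)S(t)\Delta t+\mathrm{o}(\Delta t)$ directly from \eqref{systemS}, so that the fraction of susceptibles lost specifically to infection during $[t,t+\Delta t]$ is $\beta(t)I(t)S(t)\Delta t+\mathrm{o}(\Delta t)$; dividing by $S(t)$ produces the same per-individual expansion $\beta(t)I(t)\Delta t+\mathrm{o}(\Delta t)$.
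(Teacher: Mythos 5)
Your proposal is correct and, at its core, is the same argument as the paper's: identify $\beta(t)I(t)S(t)\Delta t$ as the first-order flux from \textbf{S} to \textbf{I} and normalize by the susceptible population to get the per-individual probability $\beta(t)I(t)\Delta t+\mathrm{o}(\Delta t)$. The only cosmetic difference is that the paper normalizes by the survivors $S(t)-\mu S(t)\Delta t$ (i.e., conditions on not dying) rather than by $S(t)$ or by the full competing-risks survival factor $\e^{-\int_t^s\lambda}$, but all three choices agree to order $\Delta t$, so your hazard-rate formalization and your deterministic alternative both land on the paper's computation.
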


\begin{proof}
All susceptible non-vaccinated individuals are in the category $\textbf{S}$.
From time $t$ to time $t+\Delta t$, $\beta(t)I(t)S(t)\Delta t$ individuals will be infected, $\mu S(t)\Delta t$ will die and the remainder $S(t)-\left[\mu S(t)+\beta(t)I(t)S(t)\right]\Delta t$ will be in the class \textbf{S} at time $t+\Delta t$. Therefore, the probability to be infected from times $t$ to $t+\Delta t$, given that he/she did not die, is given by
\[
 \frac{\beta(t)I(t)S(t)\Delta t}{S(t)-\left[\mu S(t)+\beta(t)I(t)S(t)\right]\Delta t+\beta(t)I(t)S(t)\Delta t}=\beta(t)I(t)\Delta t+\mathrm{o}\left(\Delta t\right),
\]
where we used $(1-\mu \Delta(t))^{-1}=1+\mathrm{o}\left(\Delta t\right)$.

See Figure~\ref{fig:racional}, for a schematic representation of this reasoning.
\end{proof}

 \begin{figure}
 \centering
\begin{tikzpicture}
 
\node (st) [estado] {$S(t)$}; 
\node (st+) [estado,xshift=9cm,yshift=1.5cm] {$S(t+\Delta t)$};
\node (it+) [estado,xshift=9cm,yshift=-1.5cm] {$I(t+\Delta t)$};

\draw[line width=1.5pt,->] (st) -- node[above=0.1cm,sloped,below,font=\large] {$\beta(t)I(t)S(t)\Delta t$} (it+);
\draw[line width=1.5pt,->] (st) -- node[right=0.1cm,sloped,above,font=\large] {$S(t)-\left[\mu+\beta(t)I(t)\right]S(t)\Delta t$} (st+);
\draw[line width=1.5pt,->] (st) -- node[right=0.05cm,font=\Large] {$\mu S(t)\Delta t$} (0cm,-2.5cm);
 
\end{tikzpicture}
 \caption{Transitions of non-vaccinated individuals from state \textbf{S} at time $t$ out by \emph{death} (down arrow) and to states \textbf{S} and \textbf{I} at time $t+\Delta t$. Note that indications in the arrows are for the total number of individuals leaving state \textbf{S} during interval $\Delta t$.}
 \label{fig:racional}
 \end{figure}
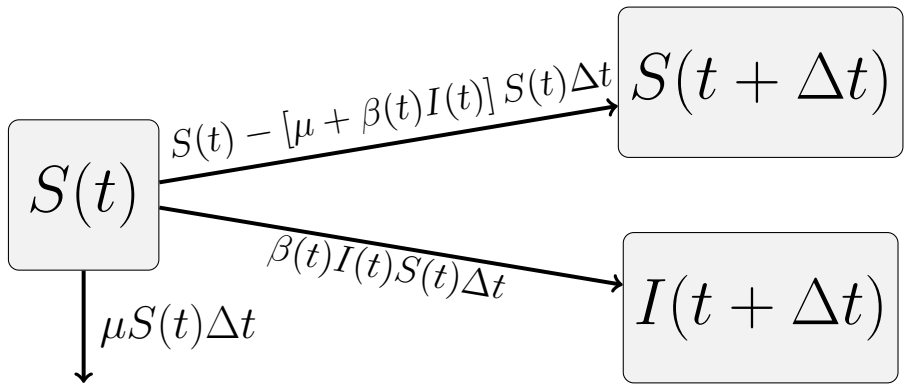

For the voluntary vaccination, we consider that a rational individual will (not) vaccinate him/herself if the risk of the disease times the probability to get the disease, given the overall strategy of the population, is larger than (respectively, small than) the risk of the vaccine. If both risks are the same, any strategy is equally advantageous. A fully informed rational individual will access, in the beginning of the season, the probability to get the disease, using all available epidemiological data, and decides his/her personal strategy as the strategy that minimizes the joint risk, i.e., the risk of the disease times the probability to get it (conditional to no vaccination), plus the risk of the vaccine (conditional to vaccination), during the next season. 
 
We start by defining the set of immunity provider strategies, i.e., the set of strategies for which a focal rational  individual will decide to be not vaccinated.

\begin{definition}\label{deff:herd_strategies}
Let $\beta\in C([0,T])$, $\beta(0)=\beta(T)$ be given.
 For a given vaccination strategy $p$, assume the existence of a persistent endemic solution $(S_1,I_1)$. Let $r_{\rd}>0$ and $r_\rv>0$ be the  risks of the disease and of the vaccination, respectively. We define $r\bydef\frac{r_\rv}{r_\rd}$. We say that
  $p$ is a  \emph{herd immunity provider strategy} if $\beta(t)I_1[p](t)< r$ for all $t$. We call $\strat_{\mathrm{h}}=\strat_\mathrm{h}[\beta]$ the set of all herd immunity provider strategies,
i.e.
\[
\strat_{\mathrm{h}}\!=\!\left\{p \in C_+([0,T])\ \text{with}\ p(0)\!=\!p(T),\ \text{and}\ \beta(t)I_1[p](t)<r, \textrm{for all } t \in [0,T]\right\}.
\]
If there is no endemic solution, we define $\strat_{\mathrm{h}}\!=\!\left\{p \!\in\! C_+([0,T]) \!\text{ with}\ p(0)\!=\!p(T)\right\}$.
\end{definition}

Note that, from the definition, it is clear that any preventive strategy is also herd immunity provider, i.e., $\strat_{\mathrm{p}}\subset\strat_{\mathrm{h}}$.

Finally, we will define the Nash-equilibrium strategy as the strategy that minimizes the joint risk for every individual, given the strategy of all other individuals.\\

\begin{definition}\label{def:nash_equilibrium}
Let $\beta\in C([0,T])$, $\beta(0)=\beta(T)$ be given.
  Let us consider a population with strategy $p\in C_+([0,T])$, and a focal individual that uses vaccination strategy $p_*\in\radon$. Let $\PD$ and $\PD_*$ be the cumulative distributions, associated to $p$ and $p_*$, respectively. Assume that the focal individual is susceptible at time $t=0$, and therefore the probability to be susceptible at a later time $t$ is given by $\e^{-\int_0^t\rd P_*-\mu t}$. The joint (disease and vaccination) risk during one season (i.e, the probability that something \emph{bad} --- disease or reaction to the vaccine --- happens in one season, times the associated risks) is given by 
 \begin{align*}
 \rho[p_*,p]
 &=r_\rd\int_0^T\beta(t)I[p](t)\e^{-\int_0^t\rd\PD_*-\mu t}\rd t+r_\rv\int_0^T\left(1-\e^{-\int_0^t\rd\PD_*-\mu t}\right)\rd t\\
 &=-r_\rd\int_0^T\left(r-\beta(t)I[p](t)\right)\e^{-\int_0^t\rd\PD_*-\mu t}\rd t+r_\rv T\ .
\end{align*}
 Given a strategy $p$, a rational individual will choose a strategy $p_*$ such that for every strategy $p'\in\radon$
 \[
  \rho[p_*,p]\le\rho[p',p]\ .
 \]
We say that $p_\nash\in\radon$ is a Nash strategy if for any sequence $p_n\in C_+([0,T])$, such that $p_n\to p_\nash$ and for every strategy $p'\in\radon$, 
\[
\limsup\left(\rho[p_\nash,p_n]-\rho[p',p_n]\right)\le 0 .
 \]
\end{definition}

If $p_\nash$ is a function, the above definition simplifies to the assertion that $\rho[p_\nash,p_\nash]\le\rho[p',p_\nash]$ for every strategy $p'\in\radon$.

We finish the subsection stating the existence theorem for both optimal and Nash-equilibrium strategies. In general terms, for $\beta\in C([0,T])$, with $\beta(0)=\beta(T)$, we prove that there is at least one optimal vaccination strategy and at least one Nash vaccination strategy. These strategies may not be functions, but measures. This implies that, after rewriting System~(\ref{system}) in the form $X'=\Gamma(t,X)$, the function $\Gamma:\R_+\times\Delta^2\to\R^2$ is a Charatheodory function (i.e., measurable in the first variable and continuous in the second) and therefore there is a (weakest) topology which guarantees existence of solutions of the differential equations and gives continuous dependence for each initial data point. See~\citep{Charalambos,heunis1984} for further details. The proof of the existence theorem below will be postponed to Appendix~\ref{ap:existence}.

\begin{theorem}\label{thm:existence}
 Assume $\beta\in C([0,T])$, with $\beta(0)=\beta(T)$. Then, there is at least one optimal vaccination strategy $p_\opt[\beta]$ and at least one Nash vaccination strategy $p_\nash[\beta]$.
\end{theorem}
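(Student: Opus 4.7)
The plan is to embed both existence claims in the common framework of weak-$\ast$ compactness of admissible Radon measures. The set $\radon$ is weak-$\ast$ compact by the Banach--Alaoglu theorem (the uniform total-mass bound of Definition deff:admissible is what makes this work), convex, and metrizable since $C([0,T])$ is separable. Lemmas lem:sol\_per and lem:effort\_measure provide, in this topology, continuous dependence of $S[p]$, $I[p]$, and $\mathbb{E}[p]$ on $p \in \radon$; analogous joint continuity of the risk $\rho[p_*,p]$ will be derived along the way.

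For the optimal strategy, I would invoke the Weierstrass extreme-value theorem directly. By Lemma lem:nonempty the set $\strat_{\mathrm{p}}[\beta]$ is non-empty, so its weak-$\ast$ closure $\overline{\strat_{\mathrm{p}}[\beta]} \subset \radon$ is a non-empty compact set. Since $\mathbb{E}$ is continuous on $\radon$, it attains its minimum on this closure, and any minimizer $p_\opt$ satisfies both conditions of Definition def:popt by construction: the first from belonging to the closure of $\strat_{\mathrm{p}}[\beta]$, the second from being the minimizer.

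For the Nash strategy, I would apply the Kakutani--Fan--Glicksberg fixed-point theorem to the best-response correspondence $BR: \radon \rightrightarrows \radon$ defined by $BR(p) := \{p_* \in \radon : \rho[p_*,p] \le \rho[p',p] \text{ for all } p' \in \radon\}$. First I would verify joint continuity of $\rho$ on $\radon \times \radon$: continuity in $p$ comes from continuous dependence of $I[p]$; continuity in $p_*$ follows from the Portmanteau theorem (weak-$\ast$ convergence $\PD_*^n \to \PD_*$ yields pointwise convergence of $\PD_*^n([0,t])$ at continuity points of $\PD_*$) combined with dominated convergence applied to the bounded integrand $\e^{-\PD_*^n([0,t]) - \mu t}$. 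Joint continuity and compactness of $\radon$ then make $BR(p)$ non-empty and compact for every $p$, and give $BR$ a closed graph, hence upper-hemicontinuity.

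The main obstacle is convexity of $BR(p)$, which fails in $\radon$ because $\PD_* \mapsto \e^{-\PD_*([0,t])}$ is nonlinear. To circumvent it, I would reparameterize the individual's optimization in the survival variable $u(t) := \e^{-\PD_*([0,t]) - \mu t}$, in which $\rho$ is affine. The set $U$ of admissible survival functions --- non-increasing c\`adl\`ag functions on $[0,T]$ with $u(0)=1$, $u(t) \le \e^{-\mu t}$, and $u(T)$ bounded below by the total-mass constraint --- is convex and compact (Helly-type selection in the product topology of pointwise convergence at continuity points) and stands in bijection with $\radon$ via $\PD([0,t]) = -\log u(t) - \mu t$. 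In the $u$-variable the set of minimizers of the affine objective is a convex face, so the transported best-response correspondence has convex values; Kakutani--Fan--Glicksberg then yields a fixed point $u_\nash$, whose corresponding $p_\nash$ satisfies $\rho[p_\nash, p_\nash] \le \rho[p', p_\nash]$ for every $p' \in \radon$. The approximation clause of Definition def:nash\_equilibrium then follows because $C_+([0,T])$ is weak-$\ast$ dense in $\radon$ (mollification) and joint continuity of $\rho$ gives $\lim(\rho[p_\nash, p_n] - \rho[p', p_n]) = \rho[p_\nash, p_\nash] - \rho[p', p_\nash] \le 0$ along any sequence $p_n \in C_+([0,T])$ converging weak-$\ast$ to $p_\nash$.
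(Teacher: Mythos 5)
Your argument for $p_\opt$ coincides with the paper's: weak compactness of $\radon$, continuity of $\mathbb{E}$ from Lemma~\ref{lem:effort_measure}, and Weierstrass on the (metrizable, hence sequentially characterized) closure $\overline{\strat_{\mathrm{p}}}$; nothing to add there.

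For $p_\nash$ you take a genuinely different route. The paper never invokes an infinite-dimensional fixed-point theorem: it discretizes time into $N=T/\Delta t$ steps, encodes cumulative distributions as monotone vectors in a compact convex set $\Upsilon\subset\R^N_+$, proves non-emptiness, closedness and convexity of the discrete best-reply sets by hand (with a case split on the sign of $r-\beta(i\Delta t)I[p_{\mathbf{v}}](i\Delta t)$ and the convexity of $x\mapsto\e^{-x}$), applies the finite-dimensional Kakutani theorem for each $\Delta t$, and then extracts a weak limit of the restricted equilibria as $\Delta t\to0$, ruling out failure of the Nash property by a continuity/approximation contradiction. You instead apply Kakutani--Fan--Glicksberg once, directly on the space of measures, and defuse the non-convexity of the best-reply sets by passing to the survival variable, in which $\rho[\cdot,p]$ is affine and its argmin is automatically a convex face. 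Your route avoids the discretization, the $\Delta t\to0$ bookkeeping and the case analysis, at the cost of having to set up the infinite-dimensional topology correctly; the paper's route is elementary at every stage but longer. Both are legitimate strategies, and your closing density-plus-continuity step correctly recovers the sequential clause of Definition~\ref{def:nash_equilibrium}.

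Two points in your reparameterization need repair before the argument closes. First, the set $U$ as you describe it is \emph{not} in bijection with $\radon$: requiring only that $u$ be non-increasing with $u(t)\le\e^{-\mu t}$ does not force $\PD([0,t])=-\log u(t)-\mu t$ to be non-decreasing. For instance $u(t)=\e^{-2\mu t}$ on $[0,T/2]$ and $u(t)=\e^{-\mu T}$ on $[T/2,T]$ lies in your $U$, yet the reconstructed $\PD([0,t])=\mu(T-t)$ decreases on $[T/2,T]$, so it is not the distribution function of a non-negative measure. The constraint must be monotonicity of $w(t)\bydef u(t)\e^{\mu t}=\e^{-\PD([0,t])}$, i.e.\ you should work with $W=\{w$ non-increasing c\`adl\`ag, $w\le1$, $w(T)\ge\e^{-(\mu+\alpha)T\bar\beta/(\gamma+\mu)}\}$; since $\e^{-\int_0^t\rd\PD_*-\mu t}=w(t)\e^{-\mu t}$, the risk is still affine in $w$ and $W$ is still convex and compact, so the idea survives intact. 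Second, Kakutani--Fan--Glicksberg requires a compact convex subset of a locally convex Hausdorff topological vector space; ``pointwise convergence at continuity points'' via Helly selection is not such a topology. Embed $W$ in $L^\infty([0,T])$ with its weak-$*$ topology (or transport the weak topology of $\radon$, checking the correspondence is a homeomorphism), verify that monotonicity and the bounds are weak-$*$ closed, and note that $w\mapsto\int_0^T\left(r-\beta(t) I[p](t)\right)w(t)\e^{-\mu t}\rd t$ is weak-$*$ continuous because the multiplier lies in $L^1$. With these two corrections your proof is sound.
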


\subsection{Vaccination strategies for regular transmission functions}\label{sec:regular}

 Despite the fact that we cannot guarantee \emph{a priori} existence of optimal and Nash strategies as functions, we will provide precise conditions for which $p_\opt$ and/or $p_\nash$ are functions. In particular, we derive explicit formulas for the optimal and Nash strategies for sufficiently regular transmission functions $\beta$, with some extra technical conditions. In the end, we discuss vaccination strategies when $\beta$ is discontinuous (in particular of bounded variation).

We start by finding an explicit formula for $p_\opt$ in some special cases.
 
\begin{theorem}\label{thm:opt_solution}
 Consider system~\eqref{system}. Assume that  
 \begin{equation}\label{eq:optcond}
\beta'(t)\ge-(\mu+\alpha)\beta(t)\left(\frac{\beta(t)}{\gamma+\mu}-1\right).
 \end{equation}
Then
 \begin{equation}\label{eq:popt}
p_\opt(t)=(\mu+\alpha)\left(\frac{\beta(t)}{\gamma+\mu}-1\right)+\frac{\beta'(t)}{\beta(t)}\ 
\end{equation}
is an optimal strategy.
\end{theorem}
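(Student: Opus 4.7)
The plan rests on a one-line reduction: on the disease-free branch, the vaccination effort depends on $p$ only through the mean of $S_0[p]$. From the linear equation
\[
 S_0'[p] = (\mu+\alpha)(1-S_0[p]) - p\,S_0[p]
\]
satisfied by the periodic disease-free profile, I isolate $p\,S_0[p]$ and integrate over $[0,T]$; periodicity kills the derivative term and yields
\[
 \mathbb{E}[p] = (\mu+\alpha)\bigl(1 - \langle S_0[p]\rangle\bigr)
\]
for every admissible $p$. Consequently, minimizing $\mathbb{E}[p]$ is equivalent to maximizing $\langle S_0[p]\rangle$. The preventive condition $\beta(t)S_0[p](t)<\gamma+\mu$ (Definition~\ref{deff:preventive}) is a pointwise upper bound $S_0[p](t)<(\gamma+\mu)/\beta(t)$, so it yields the strict lower bound
\[
 \mathbb{E}[p] > (\mu+\alpha)\Bigl(1 - \bigl\langle(\gamma+\mu)/\beta\bigr\rangle\Bigr)\qquad \forall\, p\in\strat_{\mathrm{p}}[\beta].
\]

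Next I verify that~\eqref{eq:popt} saturates this bound. Substituting the ansatz $S_0(t)=(\gamma+\mu)/\beta(t)$ into the ODE above and solving algebraically for $p(t)$ recovers precisely~\eqref{eq:popt}; hence $S_0[p_\opt](t)=(\gamma+\mu)/\beta(t)$ and $\mathbb{E}[p_\opt]$ equals the infimum identified above. Condition~\eqref{eq:optcond} is nothing but the rewriting of $p_\opt(t)\ge 0$, and continuity plus periodicity of $p_\opt$ follow from the implicit $C^1$-regularity of $\beta$ with $\beta$ and $\beta'$ agreeing at the endpoints. Admissibility of $p_\opt$ is immediate: since $\int_0^T\beta'/\beta\,\rd t=0$ by periodicity,
\[
 \int_0^T p_\opt\,\rd t = (\mu+\alpha)T\Bigl(\tfrac{\langle\beta\rangle}{\gamma+\mu}-1\Bigr) \le (\mu+\alpha)T\tfrac{\bar\beta}{\gamma+\mu},
\]
with strict slack whenever $\beta$ is not constant.

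Finally, I handle the approximation clause of Definition~\ref{def:popt} by taking $p_n:=p_\opt+1/n\in C_+([0,T])$. A comparison argument on the linear periodic equation for $S_0$ (alternatively, direct inspection of the explicit formula in Lemma~\ref{lem:S00}) shows that $p\mapsto S_0[p](t)$ is strictly pointwise decreasing, so $S_0[p_n](t)<S_0[p_\opt](t)=(\gamma+\mu)/\beta(t)$ strictly, i.e., every $p_n$ is preventive; admissibility is preserved for all $n\ge 1$ by the slack noted above, and $p_n\to p_\opt$ uniformly (hence in measure). Combined with the strict lower bound from the first paragraph, both clauses of Definition~\ref{def:popt} hold. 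The one step that requires care is the monotonicity of $p\mapsto S_0[p]$: I would establish it by writing the difference of two disease-free curves as the periodic solution of a linear ODE with non-negative forcing and positive damping, and then reading off strict positivity of the difference from the integrating-factor formula combined with the periodic boundary condition.
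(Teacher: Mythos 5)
Your proof is correct and follows essentially the same route as the paper's: the identity $\mathbb{E}[p]=(\mu+\alpha)\left(1-\langle S_0[p]\rangle\right)$ obtained by averaging the periodic disease-free equation, the fact that $p_\opt$ forces $\beta(t)S_0[p_\opt](t)\equiv\gamma+\mu$ (the paper gets this by computing $S_0[p_\opt](0)$ from Lemma~\ref{lem:S00} and propagating via an ODE for $\beta S$, whereas you verify the ansatz directly and invoke uniqueness of the periodic disease-free solution, which is equivalent), the pointwise bound $S_0[p]<(\gamma+\mu)/\beta$ for preventive $p$, and an approximating sequence of strictly preventive strategies built from the monotonicity of $p\mapsto S_0[p]$. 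Your explicit choice $p_n=p_\opt+1/n$ (valid for $n$ large enough) and the admissibility check are details the paper leaves implicit, but they do not change the argument.
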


\begin{proof}

First note that $p_\opt(t)\ge 0$ if and only if $\beta(t)$ satisfies Equation~\eqref{eq:optcond} for all $t\ge 0$.

We divide the proof in several steps:

$1^{\textrm{st}}$ step: We start by using Lemma~\ref{lem:S00} to show that $\beta(0)S[p_\opt](0)=\gamma+\mu$. Indeed, let $p=p_\opt$ and therefore
\begin{align*}
 S_0[p_\opt](0)&=\frac{(\mu+\alpha)\int_0^T\e^{-\frac{\mu+\alpha}{\gamma+\mu}\int_s^T\beta(\tau)\rd\tau-\log\frac{\beta(T)}{\beta(s)}}\rd s}{1-\e^{-\frac{\mu+\alpha}{\gamma+\mu}\int_0^T\beta(\tau)\rd\tau}}\\
 &=\frac{\gamma+\mu}{\beta(T)}\times\frac{\int_0^T\frac{\rd\ }{\rd s}\e^{-\frac{\mu+\alpha}{\gamma+\mu}\int_s^T\beta(\tau)\rd\tau}\rd s}{1-\e^{-\frac{\mu+\alpha}{\gamma+\mu}\int_0^T\beta(\tau)\rd\tau}}
 =\frac{\gamma+\mu}{\beta(0)}\ .
\end{align*}

$2^{\textrm{nd}}$ step: Now, we show that for any $t>0$, $\beta(t)S_0[p_\opt](t)=\gamma+\mu$.  
Using Equation~(\ref{systemS}) with $S(t)=S[p](t)$, $I(t)=I[p](0)=0$ and $p=p_\opt$, we find
\begin{align*}
&\left(\beta(t)S(t)\right)'\\
&\quad =\beta'(t)S(t)+\beta(t)S'(t)\\
&\quad=\beta'(t)S(t)+\beta(t)(\mu+\alpha)-\left[\beta(t)(\mu+\alpha)\left(\frac{\beta(t)}{\gamma+\mu}-1\right)+\beta'(t)\right]S(t)\\
&\qquad-\beta(t)(\mu+\alpha) S(t)\\
&\quad=\beta(t)(\mu+\alpha)\left(1-\frac{\beta(t)S(t)}{\gamma+\mu}\right).
\end{align*}
We conclude that  $\beta(t)S(t)=\gamma+\mu$ is the unique solution of the last equation with the initial condition found in the first step. 

$3^{\textrm{rd}}$ step: Let $\PD_i$ be the cumulative distribution associated to $p_i$, $i=1,2$. We will prove now that if $\int_s^T\rd \PD_1\ge\int_s^T\rd \PD_2$, $s\in[0,T)$ and $\int_0^T\rd \PD_1>\int_0^T\rd \PD_2$, then $S_0[p_1](t)<S_0[p_2](t)$. For simplicity, we write $S_i=S_0[p_i]$, $i=1,2$. 
From Lemma~\ref{lem:S00}, it is clear that $S_1(0)<S_2(0)$. Furthermore, 
\[
 \left(S_1-S_2\right)'+p_1(t)(S_1-S_2)+(\mu+\alpha)(S_1-S_2)=-(p_1-p_2)S_2\le0\ .
\]
After rewriting the last equation, we find that
\[
 \frac{\rd\ }{\rd t}\left[\e^{\int_0^t(p_1+\mu+\alpha)(\tau)\rd\tau}(S_1-S_2)\right]=-\e^{\int_0^t(p_1+\mu+\alpha)(\tau)\rd\tau}(p_1-p_2)S_2\le0, 
\]
and conclude that $S_1(t)<S_2(t)$ for all $t$.

$4^{\textrm{th}}$ step: We now show that $\mathbb{E}[p_\opt]=(\mu+\alpha)(1-\langle S_0[p_\opt]\rangle)$. Indeed, let $p_n$ be a sequence such that $\int_s^T\rd\PD_n>\int_s^T\rd\PD_\opt$ for any $s\geq 0$, where $\PD_n$ and $\PD_\opt$ are the cumulative distributions associated to $p_n$ and $p_\opt$, respectively. Assume, furthermore, that $p_n\to p_\opt$ as measure.
 We use $S_n=S_0[p_n]$: therefore $S_n(t)<S_0[p_\opt](t)=\frac{\gamma+\mu}{\beta(t)}$ and then $p_n\in\strat_{\mathrm{p}}$. Furthermore, $0=\langle S_n'\rangle=\mu+\alpha-\mathbb{E}[p_n]-(\mu+\alpha)\langle S_n\rangle$ and $\mathbb{E}[p_n]=(\mu+\alpha)(1-\langle S_0[p_n]\rangle)$. We take $n\to\infty$ and use the continuity of $S_0$ in $p_n$. This finishes this step.
 
$5^{\textrm{th}}$ step: Finally, we prove that  for any $p\in\strat_{\mathrm{p}}$, $\mathbb{E}[p]>\mathbb{E}[p_\opt]$. From $S_0[p](t)<\frac{\gamma+\mu}{\beta(t)}=S_0[p_\opt](t)$, we conclude that  $\langle S_0[p]\rangle<\langle S_0[p_\opt]\rangle$ and therefore $\mathbb{E}[p]>(\mu+\alpha)(1-\langle S_0[p_\opt]\rangle)=\mathbb{E}[p_\opt]$.
\end{proof}

Now, we show that for a non-constant seasonal epidemics, it is always better to consider the natural fluctuations, also at the level of the vaccination campaign. This result goes along with~\citet{Agur_1993}. See also the discussion in~\citet{onyango2014}.
\begin{corollary}
 Let $\beta$ be a non-constant periodic function, and assume that the optimal strategy $p_\opt[\beta]$ is given by Equation~(\ref{eq:popt}). Then, $\mathbb{E}\left[p_\opt[\beta]\right]\!<\!\mathbb{E}\left[p_\opt\left[\langle\beta\rangle\right]\right]$.
 \label{cor:effortmean}
\end{corollary}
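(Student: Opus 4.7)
The plan is to reduce the claimed inequality to Jensen's inequality applied to the strictly convex function $x\mapsto 1/x$ on $(0,\infty)$.

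First, I would combine two facts already established inside the proof of Theorem~\ref{thm:opt_solution}. The second step of that proof yields $S_0[p_\opt[\beta]](t) = (\gamma+\mu)/\beta(t)$, while the fourth step gives the identity $\mathbb{E}[p_\opt[\beta]] = (\mu+\alpha)(1 - \langle S_0[p_\opt[\beta]]\rangle)$. Together, these produce the clean formula
\begin{equation*}
\mathbb{E}[p_\opt[\beta]] = (\mu+\alpha)\left(1 - (\gamma+\mu)\left\langle \tfrac{1}{\beta}\right\rangle\right).
\end{equation*}

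Next, I would specialise to the constant transmission $\beta \equiv \langle\beta\rangle$: in that case the derivative term in~\eqref{eq:popt} vanishes, and the optimal strategy collapses to the constant $p_\opt[\langle\beta\rangle] = (\mu+\alpha)(\langle\beta\rangle/(\gamma+\mu) - 1)$, so the same identity now reads
\begin{equation*}
\mathbb{E}[p_\opt[\langle\beta\rangle]] = (\mu+\alpha)\left(1 - \frac{\gamma+\mu}{\langle\beta\rangle}\right).
\end{equation*}
Subtracting the two expressions, the claim $\mathbb{E}[p_\opt[\beta]] < \mathbb{E}[p_\opt[\langle\beta\rangle]]$ is equivalent to the strict inequality $\langle 1/\beta\rangle > 1/\langle\beta\rangle$, which is precisely Jensen's inequality for the strictly convex map $x\mapsto 1/x$; strictness is delivered by the hypothesis that $\beta$ is non-constant.

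One auxiliary consistency check is needed, namely that formula~\eqref{eq:popt} evaluated at the constant transmission $\langle\beta\rangle$ actually yields a non-negative (hence admissible) vaccination rate. This follows by averaging~\eqref{eq:popt} over one period: the logarithmic-derivative term $\beta'/\beta$ integrates to zero by periodicity of $\beta$, leaving $\langle p_\opt[\beta]\rangle = (\mu+\alpha)(\langle\beta\rangle/(\gamma+\mu) - 1)$. Since $p_\opt[\beta]\geq 0$ pointwise by the assumption of the corollary, this average is non-negative, forcing $\langle\beta\rangle \geq \gamma+\mu$, which in turn guarantees $p_\opt[\langle\beta\rangle] \geq 0$. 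With this consistency confirmed, there is no substantive obstacle: the proof is a two-line chain of identities followed by a single application of Jensen's inequality.
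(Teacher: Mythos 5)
Your proof is correct and follows essentially the same route as the paper: the paper combines the identities $S_0[p_\opt[\beta]]=(\gamma+\mu)/\beta$ and $\mathbb{E}[p_\opt[\beta]]=(\mu+\alpha)\left(1-\langle S_0[p_\opt[\beta]]\rangle\right)$ with the classical harmonic/arithmetic mean inequality, which is exactly your application of Jensen's inequality to the strictly convex map $x\mapsto 1/x$, with strictness coming from the non-constancy of $\beta$. Your closing admissibility check that $p_\opt[\langle\beta\rangle]\ge 0$ is a small bonus that the paper leaves implicit.
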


\begin{proof}
 We use the classical harmonic/arithmetic mean inequality, i.e., $\left\langle\frac{1}{\beta}\right\rangle^{-1}\le\langle\beta\rangle$, with equality if and only if $\beta$ is constant. Hence,
 \begin{align*}
  \mathbb{E}\left[p_\opt\left[\beta\right]\right]&=(\mu+\alpha)\left(1-\langle S_0\left[p_\opt\left[\beta\right]\right]\rangle\right)=(\mu+\alpha)\left(1-\left\langle\frac{\gamma+\mu}{\beta}\right\rangle\right)\\
  &\le(\mu+\alpha)\left(1-\frac{\gamma+\mu}{\langle\beta\rangle}\right)=(\mu+\alpha)\left(1-S_0\left[p_\opt\left[\langle\beta\rangle\right]\right]\right)\\&=\mathbb{E}\left[p_\opt\left[\langle\beta\rangle\right]\right],
 \end{align*}
 with equality if and only if $\beta$ is constant.
\end{proof}

\begin{corollary}
 In the optimal vaccination case, the total number of vaccinations in a single season do not exceed the number of newborns plus the number of individuals that lost their immunity during the previous season, i.e, assuming the worst case scenario, where everyone got the disease or was vaccinated, i.e., $\mathbb{E}[p_\opt]<\mu+\alpha$.
\end{corollary}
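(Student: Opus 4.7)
The plan is to reuse the identity established in the fourth step of the proof of Theorem~\ref{thm:opt_solution}, namely $\mathbb{E}[p_\opt]=(\mu+\alpha)\bigl(1-\langle S_0[p_\opt]\rangle\bigr)$, and to show that the average $\langle S_0[p_\opt]\rangle$ is strictly positive; the strict inequality $\mathbb{E}[p_\opt]<\mu+\alpha$ then follows at once. The informal upper bound $\mu+\alpha$ has a clean interpretation: it is exactly the total rate at which susceptibles are ``produced'' per unit time (births at rate $\mu$ plus loss of immunity at rate $\alpha$) when the population is entirely recovered, so no vaccination protocol can exceed it on average.

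First I would derive the key identity from scratch, which is cheap. Setting $I\equiv 0$ in \eqref{systemS} gives $S_0'=(\mu+\alpha)(1-S_0)-p\,S_0$. Integrating over one period and using periodicity ($\langle S_0'\rangle=0$) yields
\begin{equation*}
\mathbb{E}[p]\;=\;\langle p S_0[p]\rangle\;=\;(\mu+\alpha)\bigl(1-\langle S_0[p]\rangle\bigr),
\end{equation*}
which is valid for every admissible $p$. If the optimum is attained only as a Radon measure, the identity passes to the limit along any approximating sequence $p_n\in C_+([0,T])$ with $p_n\to p_\opt$ weakly: the right-hand side is continuous in $p$ by Lemma~\ref{lem:sol_per}, and the left-hand side is continuous by Lemma~\ref{lem:effort_measure}.

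Second I would establish strict positivity of $\langle S_0[p_\opt]\rangle$. Lemma~\ref{lem:S00} gives the explicit formula
\begin{equation*}
S_0[p_\opt](0)\;=\;\frac{(\mu+\alpha)\int_0^T\mathrm{e}^{-\int_s^T(p_\opt+\mu+\alpha)(\tau)\,\mathrm{d}\tau}\mathrm{d}s}{1-\mathrm{e}^{-\int_0^T(p_\opt+\mu+\alpha)(\tau)\,\mathrm{d}\tau}}\;>\;0,
\end{equation*}
since $\mu+\alpha>0$ and both integrals in the numerator and denominator are strictly positive. Because the disease-free equation $S_0'=(\mu+\alpha)-(p+\mu+\alpha)S_0$ has a strictly positive source term and a linear sink, the positivity propagates in time, so $S_0[p_\opt](t)>0$ for every $t\in[0,T]$; hence $\langle S_0[p_\opt]\rangle>0$. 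Combining this with the identity from the first step gives the strict inequality $\mathbb{E}[p_\opt]<\mu+\alpha$.

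I do not anticipate any genuine obstacle: the whole argument is essentially an averaging identity plus positivity of $S_0$. The only point requiring minor care is the measure-valued case, where one must invoke the continuity results already used in Theorem~\ref{thm:opt_solution} to pass to the limit in both sides of the identity; this is routine given Lemmas~\ref{lem:sol_per} and~\ref{lem:effort_measure}.
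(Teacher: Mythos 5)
Your proof is correct, but it takes a different route from the paper's. The paper's argument is a two-line comparison: by Definition~\ref{def:popt}, $\mathbb{E}[p_\opt]\le\mathbb{E}[p]$ for \emph{every} preventive strategy $p$, so it suffices to exhibit one preventive strategy with effort below $\mu+\alpha$; the constant strategy $p_0>(\mu+\alpha)\left(\frac{\bar\beta}{\gamma+\mu}-1\right)$ of Lemma~\ref{lem:nonempty} has $S_0[p_0]=\frac{\mu+\alpha}{p_0+\mu+\alpha}$, hence $\mathbb{E}[p_0]=\frac{p_0(\mu+\alpha)}{p_0+\mu+\alpha}<\mu+\alpha$. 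This avoids any statement about $p_\opt$ itself and in particular needs no limiting argument when $p_\opt$ is only a Radon measure. Your route instead re-derives the averaging identity $\mathbb{E}[p]=(\mu+\alpha)\left(1-\langle S_0[p]\rangle\right)$ (already present as the fourth step of the proof of Theorem~\ref{thm:opt_solution}) and concludes from $\langle S_0[p_\opt]\rangle>0$. What your version buys is an exact expression for $\mathbb{E}[p_\opt]$ rather than a mere upper bound, and it makes transparent why $\mu+\alpha$ is the natural ceiling (it is the replenishment rate of susceptibles); what it costs is the extra care you correctly flag for the measure-valued case, where the identity must be passed to the limit via Lemmas~\ref{lem:sol_per} and~\ref{lem:effort_measure}, and the observation that admissibility bounds $\PD_\opt([0,T])$ so that $S_0[p_\opt]$ stays uniformly bounded away from zero. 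Both arguments are sound; the paper's is shorter and more robust, yours is more informative.
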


\begin{proof}
 Consider $p_0$ the vaccination strategy defined in Lemma~\ref{lem:nonempty}. Then
 \[
  \mathbb{E}[p_\opt]\le\mathbb{E}[p_0]=\frac{1}{T}\int_0^Tp_0(t)S_0[p_0](t)\rd t=\frac{p_0(\mu+\alpha)}{p_0+\mu+\alpha}<\mu+\alpha\ .
 \]
\end{proof}

The preceding corollary shows that it is possible to prevent outbreaks without vaccinating the entire population, which means that it is still possible to attain the  herd immunity effect in this more restricted framework of preventive strategies.

We finish this section by stating an explicit formula for $p_\nash$ in some special cases.

\begin{theorem}
 \label{thm:nash_solution}
Assume $r<\inf_{t\in[0,T]}\beta(t)$,
\begin{align}
\label{thm:nash_solution:eq1}
&\frac{\beta'(t)}{\beta(t)}\le\gamma+\mu\le\frac{\beta'(t)}{\beta(t)}+\beta(t)\qquad\text{and}\\
\label{thm:nash_solution:eq2}
&\frac{\rd\ }{\rd t}\left[\e^{(r+\alpha-\gamma)t}\frac{\rd\ }{\rd t}\left(\frac{\e^{(\gamma+\mu)t}}{\beta(t)}\right)\right]\le \e^{(r+\mu+\alpha)t}\left(\mu+\alpha-\frac{\alpha r}{\beta(t)}\right)\ .  
\end{align}
Then, the strategy given by

\begin{equation}
\label{eq:p_nash}
\begin{array}{ll}
& p_\nash(t)\\
&\quad=\displaystyle\frac{\beta^2(t)}{(\gamma+\mu)\beta(t)\!-\!\beta'(t)}\!\left[\mu\!+\!\alpha\!+\!(\gamma+\mu)\frac{\beta'(t)}{\beta^2(t)}-2\frac{(\beta'(t))^2}{\beta^3(t)}+\frac{\beta''(t)}{\beta^2(t)}-\frac{\alpha r}{\beta(t)}\right]\\
&\qquad-[r+\mu+\alpha]\ 
\end{array}
\end{equation}
is a Nash-equilibrium strategy.
\end{theorem}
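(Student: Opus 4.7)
The plan is to construct explicitly an endemic solution associated with $p_\nash$ for which $\beta(t) I_1[p_\nash](t) \equiv r$ on $[0,T]$; once this identity holds, the Nash inequality becomes trivial because the factor $(r-\beta I_1)$ in the risk functional vanishes.

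First I would run the ansatz backwards: set $I_1(t) := r/\beta(t)$ and use equation~\eqref{systemI}, written as $I'/I = \beta S - \gamma - \mu$, to deduce
\[
S_1(t) = \frac{\gamma+\mu}{\beta(t)} - \frac{\beta'(t)}{\beta^2(t)}.
\]
The hypothesis $r<\inf\beta$ gives $0<I_1<1$; the first inequality in \eqref{thm:nash_solution:eq1} gives $S_1\ge 0$ and the second gives $S_1\le 1$. Periodicity and continuity of $(S_1,I_1)$ are inherited from $\beta$. Next, substituting this $(S_1,I_1)$ into~\eqref{systemS} and solving algebraically for $p$ (after computing $S_1'$ and grouping terms by $1/S_1$) yields exactly the formula~\eqref{eq:p_nash}. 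This fixes $p_\nash$ as a genuine function with $p_\nash(0)=p_\nash(T)$.

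Second, I need to check $p_\nash\ge 0$, which is where condition~\eqref{thm:nash_solution:eq2} enters. The key observation is the nested derivative identity
\[
\frac{\rd}{\rd t}\!\left(\frac{\e^{(\gamma+\mu)t}}{\beta(t)}\right)=\e^{(\gamma+\mu)t}S_1(t),\qquad \frac{\rd}{\rd t}\!\left[\e^{(r+\alpha-\gamma)t}\e^{(\gamma+\mu)t}S_1\right]=\e^{(r+\mu+\alpha)t}\bigl[(r+\mu+\alpha)S_1+S_1'\bigr],
\]
so \eqref{thm:nash_solution:eq2} is equivalent to $(r+\mu+\alpha)S_1+S_1'\le \mu+\alpha-\alpha I_1$, which is precisely $p_\nash S_1\ge 0$; since $S_1>0$ under the standing assumptions, we get $p_\nash\ge 0$, confirming that $p_\nash\in C_+([0,T])$ and (after a quick a posteriori check using Definition~\ref{deff:admissible}) is admissible.

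Finally, I invoke the Nash condition. By construction, $(S_1,I_1)$ is a periodic endemic solution of~\eqref{system} with vaccination $p_\nash$, and $\beta(t)I_1[p_\nash](t)=r$ for every $t\in[0,T]$. Therefore, for every $p'\in\radon$,
\[
\rho[p',p_\nash]=-r_\rd\int_0^T\bigl(r-\beta(t)I_1[p_\nash](t)\bigr)\e^{-\int_0^t\rd\PD'-\mu t}\rd t+r_\rv T=r_\rv T,
\]
independently of $p'$. In particular $\rho[p_\nash,p_\nash]=\rho[p',p_\nash]$, so the simplified form of Definition~\ref{def:nash_equilibrium} (which applies because $p_\nash$ is a function) holds with equality, and $p_\nash$ is a Nash-equilibrium strategy. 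The main obstacle is purely bookkeeping: a careful but tedious differentiation of $S_1$ is needed to recognise that the substitution into~\eqref{systemS} produces exactly the bracketed expression in~\eqref{eq:p_nash}, and a slightly delicate point is justifying that $(S_1,I_1)$ is the endemic branch selected in (C\ref{case2}), which we do by taking this branch as the distinguished endemic solution for $p=p_\nash$.
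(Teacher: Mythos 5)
Your proposal is correct and follows essentially the same route as the paper: define $I_1=r/\beta$ and $S_1=\frac{\gamma+\mu}{\beta}-\frac{\beta'}{\beta^2}$ so that the pair solves the system with the vaccination rate \eqref{eq:p_nash}, use \eqref{thm:nash_solution:eq1} and $r<\inf\beta$ for $S_1,I_1\in[0,1]$, recognise \eqref{thm:nash_solution:eq2} as exactly $p_\nash S_1\ge 0$ via the nested-exponential identity, and conclude that $\rho[\,\cdot\,,p_\nash]$ is constant because $r-\beta I_1\equiv 0$. You even spell out the equivalence between \eqref{thm:nash_solution:eq2} and $p_\nash\ge 0$ more explicitly than the paper does.
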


\begin{proof}
We show that $p_\nash\ge 0$, $S_1(t),I_1(t)\in[0,1]$, $\forall t$ and $\beta(t)I_1[p_\nash](t)=r$ for all $t\ge 0$. 
Initially, let us define 
\[
S_1(t)=\frac{\gamma+\mu}{\beta(t)}-\frac{\beta'(t)}{\beta^2(t)}=\e^{-(\gamma+\mu)t}\frac{\rd\ }{\rd t}\left(\frac{\e^{(\gamma+\mu)t}}{\beta(t)}\right)\ .
\]
Note that Equation~(\ref{thm:nash_solution:eq1}) guarantees that $S_1(t)\in[0,1]$ for all $t$. Also, the assumption on $r$ implies that $I(t)\in[0,1]$ for all $t$. 
Furthermore, 
\[
 p_\nash(t)=\frac{\mu+\alpha-S_1'(t)-\alpha I_1(t)}{S_1(t)}-(r+\mu+\alpha)\ .
\]
With this definition, note that
\begin{equation*}
 -\beta(t)S_1I_1=S_1'+p_\nash S_1-(\mu+\alpha)(1-S_1)+\alpha I_1=-rS_1
\end{equation*}
and therefore $\beta(t)I_1[p_\nash](t)=r>0$. Finally, we use Equation~(\ref{thm:nash_solution:eq2}) to prove that $p_\nash\ge 0$. 
From Definition~\ref{def:nash_equilibrium}, we conclude that $\rho[p_*,p_\nash]$ does not depend on $p_*$ and therefore $p_\nash$ is a Nash-equilibrium.
\end{proof}

Note that the definition of $S_1$ was motivated by plugging $I(t)=r/\beta(t)$ into Equation~(\ref{systemI});
in particular, this choice of $I(t)$ cancels the first term of $\rho$ on Definition~\ref{def:nash_equilibrium}.

\subsection{Comparisons }

Now, we compare the vaccination effort associated with the two extreme vaccination strategies. 

\begin{proposition}\label{lem:desigual}
Assume that $p_\nash\not\in\strat_{\mathrm{p}}$. Then, $\mathbb{E}[p_\nash]<\mathbb{E}[p_\opt]$. 
\end{proposition}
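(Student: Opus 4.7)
The plan is to average System~(\ref{system}) over one period in the two stationary regimes (disease-free under $p_\opt$, endemic under $p_\nash$) and then combine the resulting conservation identities with the preventive pointwise bound on $S_0[p_\opt]$ and the Nash complementary-slackness condition derived from Definition~\ref{def:nash_equilibrium}.

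For $p_\opt$, averaging (\ref{systemS}) with $I\equiv 0$ immediately gives $\mathbb{E}[p_\opt]=(\mu+\alpha)\bigl(1-\langle S_0[p_\opt]\rangle\bigr)$. Since $p_\opt$ lies in the closure of $\strat_{\mathrm{p}}$, the bound $\beta(t)S_0[p_\opt](t)\le\gamma+\mu$ holds pointwise, so $\langle S_0[p_\opt]\rangle\le(\gamma+\mu)\langle 1/\beta\rangle$ and hence
\[
\mathbb{E}[p_\opt]\ge(\mu+\alpha)\bigl(1-(\gamma+\mu)\langle 1/\beta\rangle\bigr).
\]
For $p_\nash\not\in\strat_{\mathrm{p}}$ the endemic state $(S_1,I_1)=(S_1[p_\nash],I_1[p_\nash])$ has $\langle I_1\rangle>0$; averaging (\ref{systemI}) gives $\langle\beta I_1 S_1\rangle=(\gamma+\mu)\langle I_1\rangle$, and substituting this into the average of (\ref{systemS}) yields
\[
\mathbb{E}[p_\nash]=(\mu+\alpha)\bigl(1-\langle S_1\rangle\bigr)-(\gamma+\mu+\alpha)\langle I_1\rangle.
\]

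The last step is to use the variational characterisation of $p_\nash$ in Definition~\ref{def:nash_equilibrium}. Taking first-order variations of $\rho[\,\cdot\,,p_\nash]$ in $p_*$ produces the KKT-type inequality $\int_t^T(\beta I_1-r)\e^{-\int_0^s\rd\PD_\nash-\mu s}\rd s\le 0$ for all $t\in[0,T]$, with equality on $\mathrm{supp}(p_\nash)$; differentiating the equality gives $\beta(t)I_1(t)=r$ on the support. In the regular Theorem~\ref{thm:nash_solution} regime, where $\mathrm{supp}(p_\nash)=[0,T]$ and $I_1=r/\beta$ pointwise, combining $\langle\beta I_1 S_1\rangle=r\langle S_1\rangle=(\gamma+\mu)\langle I_1\rangle$ with $\langle I_1\rangle=r\langle 1/\beta\rangle$ yields $\langle S_1\rangle=(\gamma+\mu)\langle 1/\beta\rangle$. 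Substituting into the formula for $\mathbb{E}[p_\nash]$ one obtains
\[
\mathbb{E}[p_\nash]=(\mu+\alpha)\bigl(1-(\gamma+\mu)\langle 1/\beta\rangle\bigr)-(\gamma+\mu+\alpha)\langle I_1\rangle<\mathbb{E}[p_\opt],
\]
with strict inequality thanks to $\langle I_1\rangle>0$ and the lower bound on $\mathbb{E}[p_\opt]$.

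The main technical obstacle is extending this beyond the regular regime: when $p_\nash$ is a Radon measure whose support does not cover $[0,T]$, the pointwise identity $\beta I_1=r$ degenerates off the support into the weaker integral inequality, so one cannot directly conclude the identification of $\langle S_1\rangle$. The natural fix is to approximate $p_\nash$ weakly in $\radon$ by regular Nash strategies $p_{\nash,n}$, each handled by the argument above, and then to pass to the limit using the continuity of $p\mapsto(S_1[p],I_1[p])$ from Lemma~\ref{lem:sol_per} together with the continuity of $p\mapsto\mathbb{E}[p]$ established in Lemma~\ref{lem:effort_measure}.
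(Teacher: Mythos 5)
Your argument is correct and follows essentially the same route as the paper's: average the two equations of System~(\ref{system}) over one period, use the Nash identity $\beta(t)I_1(t)=r$ to obtain $\langle S_1\rangle=(\gamma+\mu)\langle\beta^{-1}\rangle=\langle S_0[p_\opt]\rangle$, and let the extra term $-(\gamma+\mu+\alpha)\langle I_1\rangle<0$ supply the strict inequality. If anything you are more careful than the paper, which also uses $\beta I_1=r$ pointwise (i.e., implicitly works in the regular regime of Theorem~\ref{thm:nash_solution}) without flagging the measure-valued case that you explicitly defer to an approximation argument.
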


\begin{proof}
We use that $0=\langle\beta I_1 S_1\rangle-(\gamma+\mu)\langle I_1\rangle$ and $0=(\mu+\alpha)(1-\langle S_1\rangle)-\mathbb{E}[p_\nash]-\alpha\langle I_1\rangle-\langle\beta I_1S_1\rangle$ to conclude that $\mathbb{E}[p_\nash]=(\mu+\alpha)(1-\langle S_1\rangle)-(\alpha+\mu+\gamma)\langle I_1\rangle< (\mu+\alpha)(1-\langle S_1\rangle)$. Assume $I_1(t)\ne 0$ for some $t$, and consequently $I_1(t)\ne 0$ for all $t$. Therefore
\[
 S_1[p_\nash](t)=\frac{\gamma+\mu}{\beta(t)}+\frac{I_1'[p_\nash](t)}{\beta(t) I_1[p_\nash](t)}=\frac{\gamma+\mu}{\beta(t)}+\frac{I'_1[p_\nash](t)}{r}\ ,
\]
and finally $\langle S_1[p_\nash]\rangle= (\gamma+\mu)\langle\beta^{-1}\rangle=\langle S_0[p_\opt]\rangle$ and $\mathbb{E}[p_\opt]=(\mu+\alpha)(1-\langle S_0\rangle)$.
\end{proof}

\begin{corollary}\label{cor:nash_opt_comp}
 Let $p_\nash$ and $p_\opt$ be the Nash and optimal vaccination strategies associated to a given transmission parameter $\beta$, respectively. Assume further that conditions in Theorem~\ref{thm:opt_solution} and~\ref{thm:nash_solution} are satisfied. Then $\mathbb{E}[p_\opt]-\mathbb{E}[p_\nash]=r(\gamma+\mu+\alpha)\langle\beta^{-1}\rangle>0$.
\end{corollary}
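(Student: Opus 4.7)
The plan is to combine the averaged identities already established in the proof of Proposition~\ref{lem:desigual} with the explicit description of the endemic infectious profile at the Nash strategy coming from Theorem~\ref{thm:nash_solution}, and then simply subtract.

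First, I would recall from Proposition~\ref{lem:desigual} the three identities produced there: $\mathbb{E}[p_\nash]=(\mu+\alpha)(1-\langle S_1\rangle)-(\gamma+\mu+\alpha)\langle I_1\rangle$; the equality of averages $\langle S_1[p_\nash]\rangle=(\gamma+\mu)\langle\beta^{-1}\rangle=\langle S_0[p_\opt]\rangle$ (the first equality follows from integrating the identity $\beta(t)S_1(t)=\gamma+\mu+I_1'(t)/I_1(t)$ over a period and using periodicity of $\log I_1$, the second from the first step of the proof of Theorem~\ref{thm:opt_solution}); and $\mathbb{E}[p_\opt]=(\mu+\alpha)(1-\langle S_0[p_\opt]\rangle)$. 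Subtracting and using $\langle S_1\rangle=\langle S_0\rangle$ collapses the two susceptible-averaged terms, leaving $\mathbb{E}[p_\opt]-\mathbb{E}[p_\nash]=(\gamma+\mu+\alpha)\langle I_1\rangle$.

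Next, I would invoke Theorem~\ref{thm:nash_solution}, which under the stated hypotheses provides the Nash endemic state explicitly and, in particular, gives $\beta(t)I_1[p_\nash](t)=r$ for every $t\in[0,T]$. Averaging over one period yields $\langle I_1\rangle=r\,\langle\beta^{-1}\rangle$. Substituting this into the expression obtained in the previous step gives the claimed equality
\[
\mathbb{E}[p_\opt]-\mathbb{E}[p_\nash]=r(\gamma+\mu+\alpha)\,\langle\beta^{-1}\rangle.
\]
Strict positivity is immediate, since $\beta$ is strictly positive and continuous on the compact interval $[0,T]$, so $\langle\beta^{-1}\rangle>0$, while $r,\gamma+\mu+\alpha>0$ by assumption.

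There is no real obstacle here; the only point deserving care is verifying that the hypotheses of Theorem~\ref{thm:nash_solution} indeed place us in the endemic regime (so that $I_1\not\equiv 0$ and Proposition~\ref{lem:desigual}'s derivation applies), which is guaranteed by the condition $r<\inf_{t}\beta(t)$ together with the construction $\beta(t)I_1(t)\equiv r>0$ in that theorem. Apart from that consistency check, the corollary reduces to an algebraic identity between already proven averaged quantities.
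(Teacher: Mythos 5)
Your proposal is correct and follows essentially the same route as the paper: both rest on the averaged balance identities from System~(\ref{system}), the equality $\langle S_1[p_\nash]\rangle=(\gamma+\mu)\langle\beta^{-1}\rangle=\langle S_0[p_\opt]\rangle$, and the relation $\beta(t)I_1[p_\nash](t)=r$ giving $\langle I_1\rangle=r\langle\beta^{-1}\rangle$. The only cosmetic difference is that you reuse the grouped identity $\mathbb{E}[p_\nash]=(\mu+\alpha)(1-\langle S_1\rangle)-(\gamma+\mu+\alpha)\langle I_1\rangle$ from Proposition~\ref{lem:desigual}, whereas the paper recomputes $\langle\beta I_1S_1\rangle$ and $\alpha\langle I_1\rangle$ separately before recombining; the algebra is identical.
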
 

\begin{proof}
 From Equation~(\ref{system}), we have $\mathbb{E}[p_\opt]=\langle p_\opt S_0\rangle =(\mu+\alpha)(1-\langle S_0\rangle)$, where $\langle S_0\rangle=(\gamma+\mu)\langle\beta^{-1}\rangle$. Furthermore, 
 \[
 \mathbb{E}[p_\nash]=\langle p_\nash S_1\rangle=
 (\mu+\alpha)(1-\langle S_1\rangle)-\langle\beta I_1S_1\rangle-\alpha\langle I_1\rangle\ . 
 \]
 We also have that 
 \[
 \left\langle S_1\right\rangle =\left\langle\left(\frac{I_1'}{\beta I_1}+\frac{\gamma+\mu}{\beta}\right)\right\rangle=\frac{1}{r}\langle I_1'\rangle+(\gamma+\mu)\langle\beta^{-1}\rangle=(\gamma+\mu)\langle\beta^{-1}\rangle.
\]
On the other hand, $\langle\beta S_1I_1\rangle=r(\gamma+\mu)\langle\beta^{-1}\rangle$. Furthermore, 
$\langle I_1\rangle=r\langle\beta^{-1}\rangle$.
Finally,
\begin{align*}
 \mathbb{E}[p_\nash]&=(\mu+\alpha)\left(1-(\gamma+\mu)\langle\beta^{-1}\rangle\right)-r(\gamma+\mu)\langle\beta^{-1}\rangle-\alpha r\langle\beta^{-1}\rangle\\
 &=\mathbb{E}[p_\opt]-r(\gamma+\mu+\alpha)\langle\beta^{-1}\rangle\ .
\end{align*}
 \end{proof}

In this work, instead of minimizing the vaccination effort in the set of strategies with $\mathcal{R}_0$ below one, we minimized the vaccination effort in a subset  $\chi_p$, which we have called the set of preventive strategies. Still, we can prove the following result.   

\begin{proposition}
 $\mathcal{R}_0[p_\opt]\le 1$. 
\end{proposition}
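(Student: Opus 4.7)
The plan is to obtain the bound as a limiting version of the pointwise inequality that defines a preventive strategy. Recall from Definition~\ref{deff:preventive} that for any $p \in \strat_{\mathrm{p}}[\beta]$ we have the pointwise strict inequality $\beta(t)S_0[p](t) < \gamma+\mu$ on $[0,T]$. Integrating and dividing by $T(\gamma+\mu)$ gives immediately
\[
\mathcal{R}_0[p] \;=\; \frac{1}{T(\gamma+\mu)}\int_0^T \beta(t)\,S_0[p](t)\,\rd t \;<\; 1
\]
for every preventive strategy $p$. So on the set $\strat_{\mathrm{p}}$ the claim holds with a strict inequality.

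Next, I would invoke the defining property of an optimal strategy (Definition~\ref{def:popt}): there is a sequence $p_n \in \strat_{\mathrm{p}}[\beta]$ with $p_n \to p_\opt$ in measure. Such a sequence automatically converges to $p_\opt$ weakly as measures (and, being bounded in total variation by the admissibility condition in Definition~\ref{deff:admissible}, is uniformly integrable). Therefore, by the continuous dependence statement in Lemma~\ref{lem:sol_per}, $S_0[p_n] \to S_0[p_\opt]$ uniformly on $[0,T]$. Combined with the continuity of $\beta$, this yields
\[
\mathcal{R}_0[p_n] \;=\; \frac{1}{T(\gamma+\mu)}\int_0^T \beta(t)\,S_0[p_n](t)\,\rd t \;\longrightarrow\; \mathcal{R}_0[p_\opt].
\]
Since each term in the sequence is strictly less than $1$, the limit satisfies $\mathcal{R}_0[p_\opt] \le 1$, proving the proposition.

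The only subtle point is the passage to the limit, which rests entirely on Lemma~\ref{lem:sol_per}: one has to be sure that measure-convergence of the approximating sequence (as supplied by Definition~\ref{def:popt}) indeed triggers the weak-plus-uniformly-integrable hypothesis of that lemma. That is immediate from the uniform bound $\PD_n([0,T]) \le (\mu+\alpha)T\bar\beta/(\gamma+\mu)$ built into the admissibility condition, so no further work is needed. Note that equality $\mathcal{R}_0[p_\opt]=1$ is indeed possible (and in fact expected from Theorem~\ref{thm:opt_solution}, where $\beta(t)S_0[p_\opt](t) \equiv \gamma+\mu$ yields exactly $\mathcal{R}_0[p_\opt]=1$), which is why only the non-strict inequality can be asserted in the limit.
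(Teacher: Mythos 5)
Your argument is correct and is essentially identical to the paper's proof, which consists of the single remark that the inequality ``follows immediately'' from Definitions~\ref{deff:preventive} and~\ref{def:popt} and Equation~(\ref{def:R0}); you have simply made explicit the two implicit steps, namely integrating the pointwise inequality $\beta(t)S_0[p](t)<\gamma+\mu$ to get $\mathcal{R}_0[p]<1$ on $\strat_{\mathrm{p}}$ and then passing to the limit along the approximating sequence guaranteed by the definition of $p_\opt$. The only cosmetic quibble is your parenthetical claim that the total-variation bound makes the $p_n$ uniformly integrable --- that is not literally true (a delta-approximating sequence, as in the critical example, is bounded in total variation but not uniformly integrable as functions) --- yet this does not damage the proof, since the continuity of $p\mapsto S_0[p]$ under weak convergence is precisely what the paper itself asserts and uses in Lemmas~\ref{lem:sol_per} and~\ref{lem:effort_measure}.
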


\begin{proof}
 The inequality follows immediately from the Definitions~\ref{deff:preventive} and \ref{def:popt} and Equation~(\ref{def:R0}). 
\end{proof}

\section{Examples}\label{sec:examples}

\subsection{The constant case}

Let us consider $\beta(t)=\beta_0$, for all $t$.

If $\beta_0\le\gamma+\mu$, we have  that $S_0(t)=1$ for all $t$, and therefore $\{p_0\bydef p(t)=0,\forall t\}\in\strat_{\mathrm{p}}$. As $\mathbb{E}[0]=0$, we conclude that
$p_\opt(t)=0$. 
As there is no endemic solution, we conclude that $p_\nash(t)=0$.

Now, assume $\beta_0>\gamma+\mu$ and assume additionally that $r>0$ is small. 
From Theorem~\ref{thm:opt_solution},  $p_\opt=(\mu+\alpha)(S_0[p_\opt]^{-1}-1)=(\mu+\alpha)\left(\frac{\beta_0}{\gamma+\mu}-1\right)$ which coincides with the optimal strategy in the traditional sense of $\mathcal{R}_0[p_\opt]= 1$.
The first condition on Theorem~\ref{thm:nash_solution} is trivially satisfied and the second one is satisfied whenever
\[
r\le r_*\bydef \frac{(\mu+\alpha)(\gamma+\mu)}{\gamma+\mu+\alpha}\left(\frac{\beta_0}{\gamma+\mu}-1\right)\ .
\]
In this case, $I_1[p_\nash]=\frac{r}{\beta_0}\ne 0$, and therefore $0=I_1[p_\nash]'=$ \linebreak $I_1[p_\nash](\beta_0S_1[p_\nash]-\gamma-\mu)$ and therefore $S_1[p_\nash]=\frac{\gamma+\mu}{\beta_0}$. Furthermore 
$0=S_1[p_\nash]'(t)=\mu+\alpha-\alpha I_1[p_\nash]-\beta_0I_1[p_\nash]S_1[p_\nash]-p_\nash S_1[p_\nash]-(\mu+\alpha) S_1[p_\nash]$. Finally, 
\[
p_\nash=(\mu+\alpha)\left(\frac{\beta_0}{\gamma+\mu}-1\right)-r\left(1+\frac{\alpha}{\gamma+\mu}\right)=p_\opt-r\left(1+\frac{\alpha}{\gamma+\mu}\right)<p_\opt\ ,
\]
i.e., the rational level of vaccination will not be able to eliminate the disease as previously shown in \citep{bauch2004}. It is clear that both $p_\opt$ and $p_\nash$ are admissible.

\subsection{The sinusoidal case}
	For the sinusoidal case, with $\beta(t)=\beta_{0}(1+\eps\cos t)$ we can provide precise results. First, note that the condition \eqref{eq:optcond} in Theorem~\ref{thm:opt_solution} is 
	\[
	-\beta_0\eps\sin t\ge -(\mu+\alpha)\beta_0(1+\eps\cos t)\left(\frac{\beta_0(1+\eps\cos t)}{\gamma+\mu}-1\right)\ ,
	\]
	or, equivalently 
	\[
	\eps\sin t\le\frac{(\mu+\alpha)\beta_0}{\gamma+\mu}(1+\eps\cos t)^2-(\mu+\alpha)(1+\eps\cos t)\ .  
	\]
	If
	\[ 
	 \eps\le(\mu+\alpha)\left(\frac{\beta_0}{\gamma+\mu}-1\right)-\eps(\mu+\alpha)\left(\frac{2\beta_0}{\gamma+\mu}-1\right)\ ,
	\]
	condition \eqref{eq:optcond} will be satisfied for every $t$. It follows that last equation is true if
	\[
	\eps\le\eps_0\bydef\left(\frac{2(\mu+\alpha)\beta_0}{\gamma+\mu}+1-\mu-\alpha\right)^{-1}(\mu+\alpha)\left(\frac{\beta_0}{\gamma+\mu}-1\right).
	\]

	Now, assume $\beta(t)=\beta_{0}(1+\eps\cos t)$ with $\eps\le\eps_0$. Note that
	\[
	 \langle S_0[p_\opt]\rangle=\frac{\gamma+\mu}{2\pi}\int_0^{2\pi}\frac{\rd t}{\beta(t)}=\frac{\gamma+\mu}{2\pi\beta_0}\int_0^{2\pi}\frac{\rd t}{1+\eps\cos t}=\frac{\gamma+\mu}{\beta_0\sqrt{1-\eps^2}}\ ,
	\]
	showing that the vaccination effort for the optimal solution 
	\[
	\mathbb{E}[p_\opt]=(\mu+\alpha)(1-\langle S_0\rangle)
	\] 
	decreases with the oscillation amplitude. Furthermore, 
	\begin{align*}
	&p_\opt(t)\\
	&\quad=(\mu+\alpha)\left(\frac{\beta_0(1+\eps\cos t)}{\gamma+\mu}-1\right)-\frac{\eps\beta_0\sin t}{\beta_0(1+\eps\cos t)}\\
	&\quad=(\mu+\alpha)\left(\frac{\beta_0}{\gamma+\mu}-1\right)+\eps\left(\frac{(\mu+\alpha)\beta_0}{\gamma+\mu}\cos t-\sin  t\right)+\mathcal{O}\left(\eps^2\right)\\
	&\quad=(\mu+\alpha)\left(\frac{\beta_0}{\gamma+\mu}-1\right)+\eps\frac{\sqrt{(\mu+\alpha)^2\beta_0^2+(\mu+\gamma)^2}}{(\mu+\gamma)}\cos(t+\varphi)+\mathcal{O}\left(\eps^2\right)\ ,
	\end{align*}
where $\varphi=\mathop{\mathrm{arctg}}\frac{\gamma+\mu}{\beta_0(\mu+\alpha)}$.
	It is clear that $\langle\frac{\beta'}{\beta}\rangle=0$ and therefore $\langle p_\opt\rangle$ is the same as in the constant case. However, the vaccination effort $\langle p_\opt S_0[p_\opt]\rangle=(\mu+\alpha)\left(1-\langle S[p_\opt]\rangle\right)=(\mu+\alpha)\left(1-\frac{\gamma+\mu}{\beta_0\sqrt{1-\eps^2}}\right)$ is strictly smaller than in the constant case. Furthermore, to first order in $\eps$, the
	optimal vaccination strategy $p_\opt$ lags behind the transmission rate by a phase shift of $\varphi$. In particular, if the birth/mortality rate is very high (and therefore there is an extremely  fast renewal of susceptible individuals), i.e., $\mu\to\infty$, then $\varphi\to\mathop{\mathrm{arctg}}\frac{1}{\beta_0}$. This means that the optimal vaccination time  shift will depend on the average transmission rate. In the more realistic case $\mu\to0$, i.e, when the renewal is low,  then $\varphi\to\mathop{\mathrm{arctg}}\frac{\gamma}{\beta_0\alpha}$. This case is of much higher interest in practice as it models the case when the disease time scales (contact rate, recovery time, temporary immunity) are much smaller than the typical time scale of one generation of the population. Assuming $\frac{\gamma}{\beta_0\alpha}\approx 1$ (i.e., immunity lasts for approximately $\mathcal{R}_0\approx\alpha^{-1}$  years), then $\varphi\approx \frac{\pi}{4}$, i.e., in a seasonal epidemic,  the peak of the vaccination rate, $p$, should be approximately 1.5 months before the transmission peak. Note that this is not the peak of the instantaneous number of vaccinations, $pS$, as illustrated in Figure \ref{fig:strategies}.

Now, we use Theorem~\ref{thm:nash_solution} to obtain the Nash strategy. Condition~(\ref{thm:nash_solution:eq1}) can be rewritten
\begin{align*}
-\eps\sin t+\mathcal{O}\left(\eps^2\right)= -\frac{\eps\sin t}{1+\eps\cos t}\le\gamma+\mu&\le \beta_0+\eps(\beta_0\cos t-\sin t)+\mathcal{O}\left(\eps^2\right)\\
&=\beta_0\!+\!\eps\sqrt{\beta_0^2+1}\cos(t+\tilde\varphi)\!+\!\mathcal{O}\left(\eps^2\right),
\end{align*}
for a certain $\tilde\varphi$. This is true if, for example, $\eps\le\gamma+\mu\le\beta_0-\eps\sqrt{\beta_0^2+1}$, in particular if $\beta_0>\gamma+\mu$, and $\eps$ is small enough.

 Equation~(\ref{thm:nash_solution:eq2}) is equivalent to the non-negativeness of $p_\nash$, given by Equation~(\ref{eq:p_nash}). Let us assume that the relative risk of the vaccination is low, i.e., $r=\mathcal{O}(\eps)$. After some extensive, but straightforward calculations, we conclude that
 \begin{align*}
  &p_\nash[\beta](t)\\=&(\mu+\alpha)\left(\frac{\beta_0}{\gamma+\mu}-1\right)-r\left(\frac{\alpha}{\gamma+\mu}+1\right)\\
  &\quad+\eps\frac{\beta_0}{\gamma+\mu}\sqrt{(\gamma+\mu)^2+1}\left(\frac{\mu+\alpha}{\gamma+\mu}\cos(t+\psi)-\frac{1}{\beta_0}\sin(t+\psi)\right)\\
=  &(\mu+\alpha)\left(\frac{\beta_0}{\gamma+\mu}-1\right)-r\left(\frac{\alpha}{\gamma+\mu}+1\right)\\
  &\quad+\eps\frac{1}{(\gamma+\mu)^2}\sqrt{(\gamma+\mu)^2+1}\sqrt{\beta_0^2(\mu+\alpha)^2+(\gamma+\mu)^2}\cos(t+\varphi+\psi)\ ,
 \end{align*}
where $\tan\psi=\frac{1}{\gamma+\mu}$ and $\tan\varphi=\frac{\gamma+\mu}{\beta_0(\mu+\alpha)}$.
It is clear that if $\frac{\beta_0}{\gamma+\mu}>1$, then, for $\eps$, $r$ small enough, the Nash-equilibrium is given by $p_\nash>0$ as obtained in the previous equation. It is also clear that both $p_\opt$ and $p_\nash$ are admissible at leading order.

If $\mu\to 0$, $\beta_0\alpha/\gamma=\mathcal{O}(1)$, then $\varphi\approx\frac{\pi}{4}$; this means that $p_\nash$ will peak shortly before $p_\opt$; however amplitude oscillations are slightly larger for $p_\nash$.
	Finally, the difference between both vaccination efforts are given by
	\[
	\mathbb{E}[p_\opt]-\mathbb{E}[p_\nash]=\frac{r(\gamma+\mu+\alpha)}{2\pi\beta_0}\int_0^{2\pi}\frac{\rd t}{1+\eps\cos t}=\frac{r(\gamma+\mu+\alpha)}{\beta_0\sqrt{1-\eps^2}}\ .
  	\]
Figure~\ref{fig:strategies} shows  the peaks and oscillations of the optimal and Nash strategies, for the sinusoidal case, and  how  the
difference between  them depends on the vaccination risk.

In Figure~\ref{fig:comparison_vaccinations}, we consider three different vaccination profiles for  sinusoidal transmission: optimal vaccination, $p(t)=p_\opt[\beta]$; optimal vaccination  in the case of average transmission rate, $p=p_\opt[\langle\beta\rangle]$; and optimal vaccination in the case of maximum transmission, $p=p_\opt[\bar\beta]$  (used in Lemma~\ref{lem:nonempty}). This example also illustrates the result of Corollary~\ref{cor:effortmean}, namely, it shows that, for a given sinusoidal transmission rate $\beta$, the effort associated to $p_\opt[\langle\beta\rangle]$ is larger than the effort associated to $p_\opt[\beta]$. However, $p_\opt[\langle\beta\rangle]$ does not prevent initial outbreaks.  Additionally, we show that $p_\opt[\bar{\beta}]$ prevents the initial outbreak but with a higher associated effort.

\begin{figure}
\centering
 \includegraphics[width=0.47\textwidth]{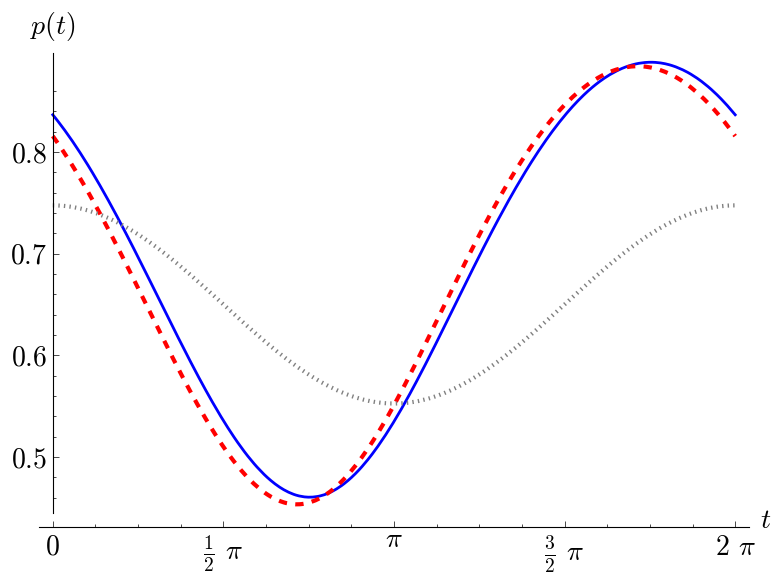}
\includegraphics[width=0.47\textwidth]{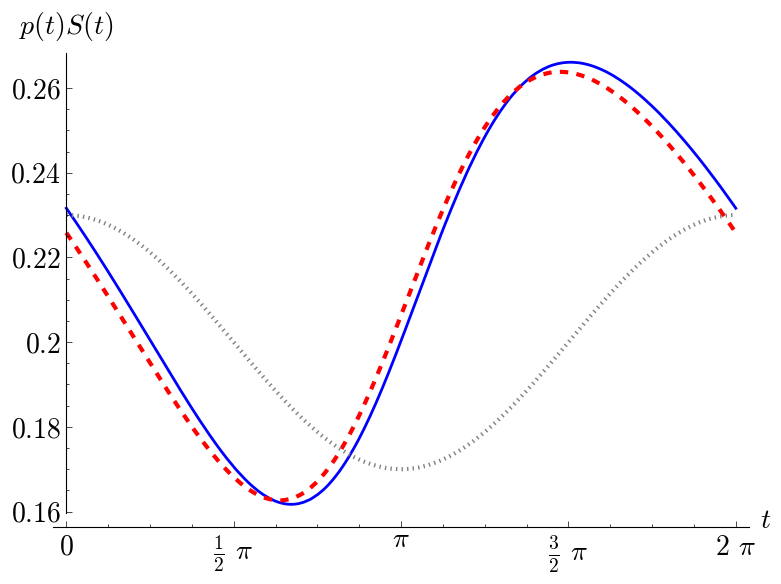}
\\
 \includegraphics[width=0.47\textwidth]{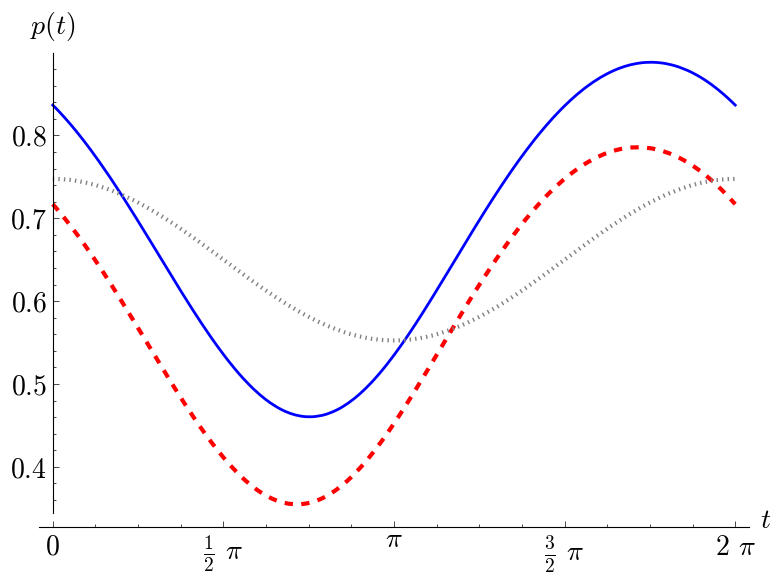}
\includegraphics[width=0.47\textwidth]{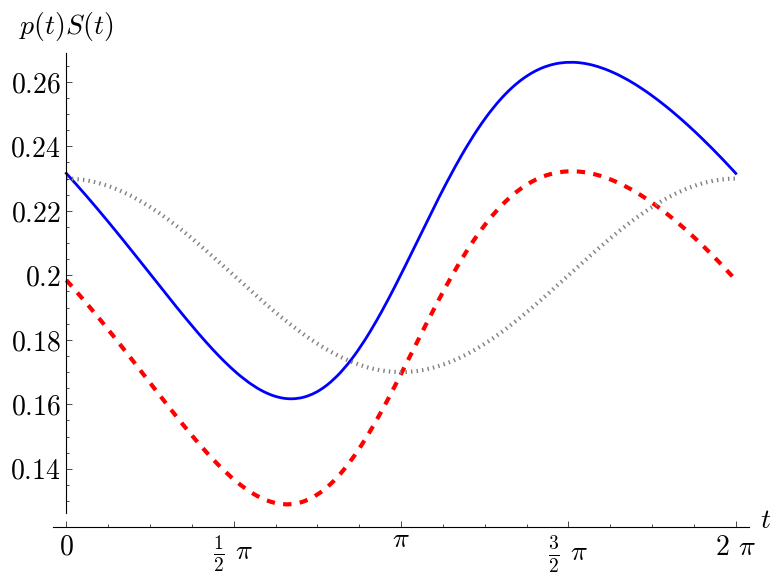}
 \caption{ Optimal and Nash strategies for the sinusoidal transmission with low and high vaccination risk, $r$. We use $\mu=(80 T)^{-1}$, $\gamma=52/T$, $\alpha=2/T$, with $T=2\pi$ and consider  $\beta(t)=26(1+0.3\cos(t))$, implying $\frac{\gamma}{\beta_0\alpha}=1$ and $\mathcal{R}_0=\frac{\langle\beta\rangle}{\gamma}=\pi\approx 3$. Left: Both $p_\opt$ (blue, continuous) and $p_\nash$ (red, dashed) oscillate in a synchronous way. The peak $p_\opt$ is $\pi/4$ before the peak of the transmission rate; the peak of $p_\nash$ is slightly before. Right: Time dependent vaccination effort, in the two cases. Above: small vaccination risk, $r=0.005$; $p_\nash$ is higher than $p_\opt$ in the beginning of the epidemic season and lower otherwise. Below: high vaccination risk, $r=0.1$; $p_\opt>p_\nash$ for all $t$, but the difference is larger when $\beta$ is decreasing. For simplicity, we plot in all cases $\beta(t)$ in gray dotted line (out of scale). The choice of parameters is inspired by influenza like epidemiological parameters: life expectancy of 80 years, recovery time of 1 week, and temporary immunity of 6 months, and $\mathcal{R}_0\approx 3$ (cf.~\citep{Goeyvaerts2015}). In the absence of vaccination, there is only one stable attractor of the solution $(S(t),I(t))$ with period $T$ (cf.~\citep{Kuznetsov_Piccardi}).
 }
\label{fig:strategies}
\end{figure}

\begin{figure}
 \centering
 \includegraphics[width=0.8\textwidth]{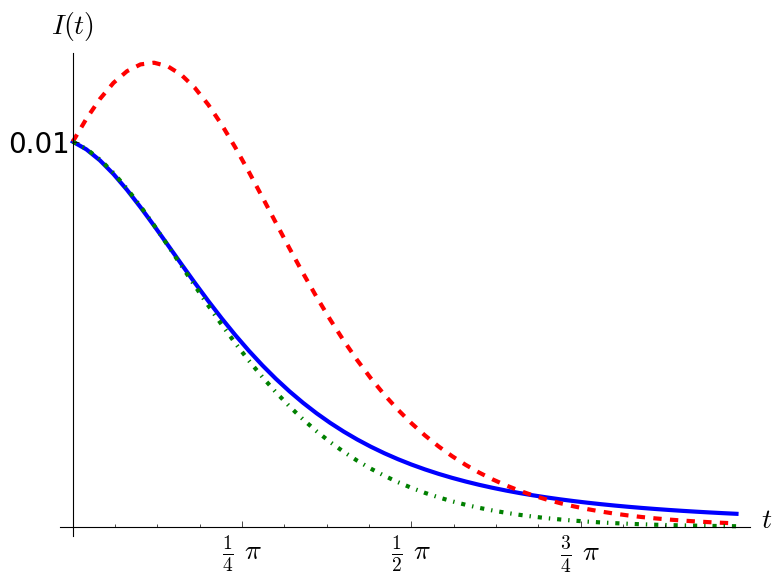}
 \caption{Prevention of initial outbreaks for different vaccination profiles.
We consider all parameters, including the transmission rate, as in Figure~\ref{fig:strategies} and assume initial conditions given by $S_0[p](0)-0.01$ and $I(0)=0.01$, where $S_0[p](0)$ is given by Lemma~\ref{lem:S00}. We consider three different vaccination profiles: $p=p_\opt[\beta]$ (blue, continuous), $p=p_\opt[\langle\beta\rangle]$ (red, dashed) and $p=p_\opt[\bar\beta]$ (green, dash-dot). Note that the optimal strategy in the case of the average transmission rate is unable to prevent initial outbreaks (however, $I(t)\to 0$ when $t\to\infty$). Vaccination efforts $\mathbb{E}[p_\opt[\bar\beta]]=0.1165> \mathbb{E}[p_\opt[\langle\beta\rangle]=0.1098>\mathbb{E}[p_\opt[\beta]]=0.1090$, respectively.
 }
 \label{fig:comparison_vaccinations}
\end{figure}

\subsection{A critical case}

Condition \eqref{eq:optcond} in Theorem~\ref{thm:opt_solution}, provides a lower bound on the derivative of the transmission coefficient $\beta$. In particular, if $\beta$ is increasing, Theorem~\ref{thm:opt_solution} provides (at least, in principle) one optimal strategy. However, to apply Theorem~\ref{thm:opt_solution}, it is important that $\beta$ does not decrease instantaneously. In this section, we will study a critical example for optimal vaccinations, where $\beta$ satisfies the critical condition  $\beta'=-(\mu+\alpha)\beta\left(\frac{\beta}{\gamma+\mu}-1\right)$ in $(0,2\pi)$ and is not differentiable at $t=0$. As usual, $\beta$ is periodic in $\R$. Explicitly, we consider
\begin{equation}\label{eq:critical}
\beta(t)= \frac{\gamma+\mu}{1-K\exp(-(\mu+\alpha)t))},\qquad t\in[0,2\pi)\ ,
\end{equation}
where $K\in(0,1)$ is a constant.

We consider a mollification of $\beta$, $\beta_\eps$, such that $\beta_\eps$ is differentiable, satisfies the condition \eqref{eq:optcond}, and $\beta_\eps\to\beta$ when $\eps\to 0$ pointwise. Let $p_\eps\bydef p_\opt[\beta_\eps]$. It is clear that $p_\eps(t)\to 0$ for $t\ne 0$, and
\begin{align*}
 \int_0^{2\pi} p_\eps\rd t&\to (\mu+\alpha)2\pi\left(\frac{1}{\gamma+\mu}\langle\beta\rangle-1\right)\\&=2\pi(\mu+\alpha)\left(\frac{1}{2\pi(\mu+\alpha)}\log\frac{\e^{2(\mu+\alpha)\pi}\!-\!K}{1-K}-1\right)\\
 &=\log\frac{\e^{2(\mu+\alpha)\pi}\!-\!K}{1-K}-2\pi(\mu+\alpha)=\log\frac{\e^{2(\mu+\alpha)\pi}-K}{1-K}+\log\e^{-2\pi(\mu+\alpha)}\\
 &=\log\frac{1-K\e^{-2(\mu+\alpha)\pi}}{1-K}=:\Gamma .
\end{align*}
Thus $p_\eps\to \Gamma\sum_{i\in\mathbb{Z}}\delta_{2\pi i}:=p_\opt$. From Equation~(\ref{eq:popt}), it is clear that for $t\in(0,2\pi)$, $p_\opt[\lim\beta_\eps](t)=\lim p_\opt[\beta_\eps](t)$, and therefore $p_\opt[\lim\beta_\eps]=\lim p_\opt[\beta_\eps]$ as measures. This shows that discontinuities in $\beta$ will be associated to peak vaccinations. 
From Lemma~\ref{lem:S00}, we have that
\[
 S[p_\opt](0)=\frac{(\mu+\alpha)\int_0^T\e^{-(\mu+\alpha)(T-s)-\frac{\Gamma}{2}-\frac{\Gamma}{2}\theta(-s)}\rd s}{1-\e^{-\Gamma-(\mu+\alpha)T}}=\frac{\e^{-\Gamma/2}(1-\e^{-(\mu+\alpha)T})}{1-\e^{-\Gamma-(\mu+\alpha)T}}\ ,
\]
where $\theta(s)=0$ for $s\le0$ and $\theta(s)=1$ for $s>0$ is the Heaviside function. 
Furthermore
\[
 \int S[p_\eps](t)p_\eps(t)\rd t\to \Gamma\e^{-\Gamma/2}\frac{1-\e^{-(\mu+\alpha)T}}{1-\e^{-\Gamma-(\mu+\alpha)T}}\le (\mu+\alpha)T\ ,
\]
where we used that 
\[
x\e^{-x/2}\frac{1-\e^{-y}}{1-\e^{-x-y}}=x\frac{\mathop{\mathrm{sinh}}\frac{y}{2}}{\mathop{\mathrm{sinh}}\frac{x+y}{2}}\le y\ , \forall x,y>0\ .
\]
This last inequality follows from the convexity of the function 
\[
\mathop{\mathrm{sinh}}\alpha=\mathop{\mathrm{sinh}}\left(\frac{\beta}{\alpha+\beta}0+\frac{\alpha}{\alpha+\beta}(\alpha+\beta)\right)\le\frac{\beta}{\alpha+\beta}\mathop{\mathrm{sinh}}0+\frac{\alpha}{\alpha+\beta}\mathop{\mathrm{sinh}}\left(\alpha+\beta\right)\ ,
\]
and therefore, for $\alpha,\beta> 0$,
\[
 \frac{\mathop{\mathrm{sinh}}\alpha}{\mathop{\mathrm{sinh}}(\alpha+\beta)}\le\frac{\alpha}{\alpha+\beta}\le\frac{\alpha}{\beta}\ .
\]
Optimal strategy for the critical case, using a relaxed version of the transmission function $\beta(t)$,  is illustrated in  Figure~\ref{fig:critical_vaccinations}.

\begin{figure}
 \centering
 \includegraphics[width=0.47\textwidth]{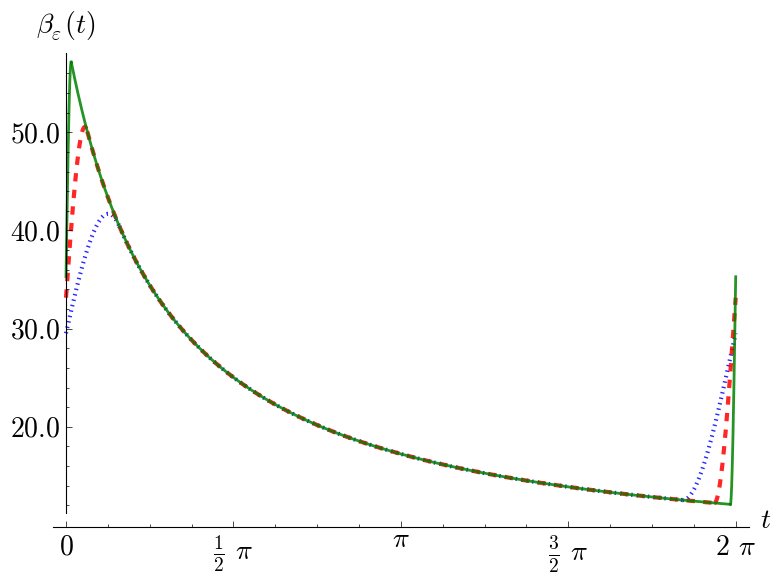}
 \includegraphics[width=0.47\textwidth]{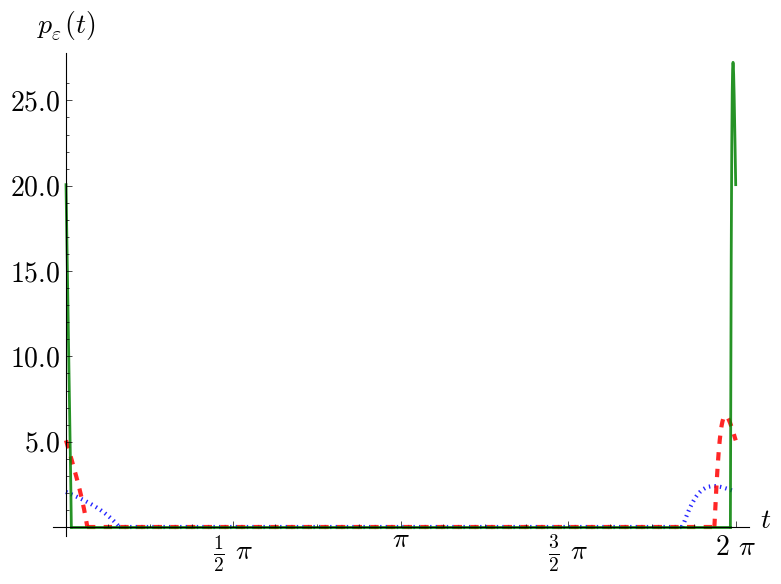}
 \caption{Optimal strategy for the critical case, with relaxed $\beta(t)$ Left: Relaxed version of $\beta(t)$ given by Equation~\eqref{eq:critical} with $K=0.862$, and the other parameters as in Figure~\ref{fig:SIR}, with $T=2\pi$. We consider a sequence $\beta_\eps$ of differentiable functions, such that $\beta_\eps(t)=\beta(t)$ for $t\in(\eps,2\pi-\eps)$ and the points $2\pi-\eps$ and $2\pi+\eps\equiv\eps$ are connected by a third order polynomial. Right: a sequence of $p_\eps\bydef p_\opt[\beta_\eps]$; note that $\left(p_\eps\right)_{\eps>0}$ resembles a delta-sequence.}
 \label{fig:critical_vaccinations}
\end{figure}

\section{Discussion}

In this work, we consider a SIRS model with periodic transmission, where we introduce periodic vaccination of adults. We are naturally led to consider temporary immunity, as discussed in~\citep{onyango2014}: in the case of diseases with life-long immunity and lifespan much longer than the period ($\mu\sim \alpha \ll {1}/{T}$) the vaccination is not affected by periodicity. For instance, measles has periodic transmission rate but, since it confers permanent immunity,  periodic vaccination is not used.
In this work we implicitly assume $\mu \ll  {1}/{T} \approx \alpha$.

We study the consequences of two extreme types of vaccination strategies: mandatory vaccination, where a certain predefined fraction of the population is vaccinated; and voluntary vaccination, where individuals can choose freely to be vaccinated or not, according to their risk perception. 
Classically, the objective is to minimize the vaccination effort while reducing the effective reproductive number below one, which guaranties long term disease elimination. Here, we choose to work with an alternative definition of optimal vaccination. We define a class of preventive vaccination strategies as vaccination profiles that, for any sufficiently small perturbation of the disease free state, the number of infectious individuals is monotonically decreasing, avoiding the occurrence of any epidemic event. This approach allows, for specific regular transmission functions $\beta(t)$, the derivation of an analytical expression of the optimal strategy. In general, we prove the existence of an optimal strategy, in a suitably defined closure of the space of all preventive strategies, which minimizes the vaccination effort.

In this work, we extend the classical results by \citet{bauch2004} to periodic functions, based on a series of recent results on periodic diseases.  We model human behaviour using classical economical theory, where individuals are assumed to be rational and fully informed.
We define the set of vaccination strategies that provide herd immunity, for which the rational strategy of a given focal individual is not to be vaccinated. Finally, we prove the existence of a Nash vaccination strategy as the strategy that minimizes the joint risk for every individual, taking into account the strategy of all other individuals.

In general, both optimal and Nash strategies will not be functions but Radon measures. For specific forms of the transmission rate, we provide explicit formulas, which includes some important examples as constant or sinusoidal transmission functions.

There are several natural limitations of the work presented here. One first limitation is that we consider only the stationary solution of the System~(\ref{system}), but we never discuss the approach to this equilibrium. This is an important question, both in the study of ordinary differential equations (i.e., the study of the basin of attraction) and in evolutionary game theory, where the study of $\omega$-limits of conveniently defined dynamical equations is preferred to the static study of Nash-equilibria. In the non-stationary case, a rational decision will require the ability to forecast the evolution of the epidemic, i.e., rational decisions will depend on future decisions of the entire population and not only on the past decisions. This is mathematically described by the so called ``mean field game theory''~\citep{Lasry_Lions_2007} and will be object of a future work.

Closely related ideas will also help us to solve one of the major gaps of the current work: the lack of a numerical method for finding Nash-equilibria solution when Theorem~\ref{thm:nash_solution} fails. More precisely, the idea will be to develop a numerical method that allows constant update in individual decisions and, consequently, also at the population level. As discussed before, this will require, at the individual level, a certain expectation on the future evolution of the disease. 

It is also important, and will be subject of a future work, to design a precise scheme, possibly numerical, that allows to go beyond Theorem~\ref{thm:opt_solution}. This will require the use of Optimal Control Theory. In fact, given $p$, it is possible to explicitly obtain the disease free solution $S[p](t)$ (see Lemma~\ref{lem:S00}) and therefore we need to minimize $\int_0^Tp(t)S[p](t)\rd t$ in $\strat_{\mathrm{p}}[\beta]$. (Equivalently, we may maximize $\int_0^TS[p](t)\rd t$ in the same set, as $\langle pS_0\rangle=(\mu+\alpha)\left(1-\langle S_0\rangle\right)$.) We also plan to compare $p_\opt$ with the optimal solution in the approach in which the vaccination effort is minimized in the class of vaccination functions $p$ such that $\mathcal{R}_0\le 1$.

Furthermore, despite the simplicity of the periodic SIRS system (even with vaccination), solutions can be extremely complicated; even chaotic solutions may be present in such simple systems, cf.~\citep{Kuznetsov_Piccardi}. The coupling of the differential equations with human rational behaviour presented in this work only started the exploration of all this mathematical richness.

\appendix

\section{Proof of Lemma~\ref{lem:sol_per}}\label{ap:proof1}

We follow closely the proof at~\citep{rebelo2012}, where $x_1=I$, $x_2=S$ and $x_3=R$. Also, $m=1$ indicates that there is only one infectious class and $n=3$ denotes the three possible classes in the model. We readily verify that conditions $(A_1)-(A_5)$ in~\citep{rebelo2012} 
are satisfied. 
Uniqueness and stability (in the disease free subspace) of the disease free solution $(S_0(t),0,R_0(t))$ is guaranteed by standard theorems.
The linearisation of system~\eqref{system} restricted to $I=0$ around the disease-free solution $(S_0(t),0,R_0(t))$ is given by
\[
	\left\{
		\begin{array}{l}
			s'=-\mu s -p(t) s + \alpha r \\
			r'=-\mu r +p(t) s - \alpha r
		\end{array}
	\right.
\]
that can be explicitly solved to get
\[
	\left\{
		\begin{array}{l}
			s(t)=s(0)a(t)+(s(0)+r(0))b(t)\\
			r(t)=(s(0)+r(0))e^{-\mu t}-s(t)
		\end{array}
	\right.
\]	
where $a(t)=e^{-\int_0^t \mu+\alpha+p(\tau)\,d\tau}$ and $b(t)=\int_0^t \alpha e^{-\mu\tau -\int_\tau^t(\mu+\alpha+p(l))\,\rd l}\,\rd\tau$. This yields that the monodromy matrix $M(t)$ of the linearised system is
\[
M(T)=
	\left[
		\begin{array}{cc}
			a(T)+b(T)& b(T)\\
			e^{-\mu T}-a(T)-b(T)& e^{-\mu T}-b(T)
		\end{array}
	\right]\ .
\]	
We compute the Floquet multipliers $\rho_1=a(T)<1$ and $\rho_2=e^{-\mu T}<1$ and conclude condition $(A_6)$.
We verify immediately that conditions $(A_7)$ and $(A_8)$ are also satisfied.

Let $(S,I,R)$ be a solution of the system~\eqref{system} and $(S_0,0,R_0)$ the disease free solution of the same system. Therefore
\begin{equation}\label{Sbound1}
(S-S_0)'=-\alpha I-\beta IS -(p(t)+\mu +\alpha)(S-S_0)\leq  -(p(t)+\mu+\alpha)(S-S_0).
\end{equation}
This yields for every $t\geq 0$. Consequently, by Gronwall's lemma 
\[
S(t)-S_0(t)\leq (S(t_0)-S_0(t_0)) \exp \left(-\int_{t_0}^t(p(s)+\mu+\alpha)\,\rd s\right)
\]
for any $t\geq t_0 \geq 0$.

So, for any $\epsilon>0$ there is $t_1(\epsilon)\geq 0$ such that for any $t>t_1$ we have
\begin{equation}\label{x2-x2*}
S(t)-S_0(t) < \epsilon.
\end{equation}

Now, assume that there is $t_0\geq 0$ such that $I(t)\leq \epsilon$ for every $t\geq t_0$. Therefore, as $S<1$,
\begin{equation}\label{Sbound2}
(S_0-S)'=\alpha I+\beta IS -(p(t)+\mu + \alpha)(S_0-S) \leq  (\beta+\alpha)\epsilon-(p(t)+\mu +\alpha)(S_0-S) ,
\end{equation}
and, by~\citet[lemma~1]{rebelo2012}
there exists $k>0$, independent of $\epsilon$, and $t_2(\epsilon)\geq t_0$ such that for all $t\geq t_2$
\begin{equation}\label{x2*-x2}
S_0(t)-S(t) \leq k \epsilon.
\end{equation}

For any solution in the disease-free subspace (i.e., with $I(t)=0$ for all $t\ge 0$), we have the validity of 
 Conditions~(\ref{Sbound1}) and~(\ref{Sbound2}), and, therefore, we conclude that the disease free solution is globally asymptotically stable.

\smallskip

Now, we show the alternative in Lemma~\ref{lem:sol_per}. Let $\mathcal{R}_0=\frac{\langle \beta S_0\rangle}{\gamma+\mu}$, as defined in~\citet{onyango2014} and in a more general setting in~\citet{rebelo2012,wang2008}. We show that the only relevant assumption in~\citet[theorem 2]{rebelo2012} is the value of $\mathcal{R}_0$. In particular, we define the $1\times 1$ matrices
$F(t)=[\beta(t)S_0(t)]$ and $V(t)=[\mu + \gamma]$~\citep{rebelo2012}. By~\eqref{x2-x2*}, for every $\epsilon>0$ there is $t_1>0$ such that for $t \geq t_1$,
\begin{align*}
I'(t)&=	\beta(t)I(t)S(t)-(\gamma+\mu)I(t)
			\leq \beta(t)I(t)(\epsilon+S_0(t))-(\gamma+\mu)I(t)\\ 
			&\leq \left(\frac{F(t)}{\lambda_1(\epsilon)}-V(t)\right)I(t)
\end{align*}
where  $\lambda_1(\epsilon)=\min_t \frac{S_0(t)}{\epsilon+S_0(t)}$. Notice that $\lambda_1(\epsilon)>0$ and that $\lambda_1(\epsilon) \to 1$ from below when $\epsilon \to 0$.

If $\mathcal{R}_0<1$, \citet[theorem~2, condition 1]{rebelo2012} guarantees that the diseases dies out, $I(t) \to 0$ as $t \to \infty$ and that $(S_0,0,R_0)$ is globally asymptotically stable.

Now, assume $\mathcal{R}_0>1$; from $\frac{1}{T}\int_0^T\beta(t)S_0(t)\,\rd t-(\mu + \gamma)>0$ and by continuity in $t$,  we have that $F(t)-V(t)$ is irreducible for some $t \in [0,T)$.

For any $\epsilon>0$, if there is $t_0 \geq 0$ such that $I(t) \leq \epsilon $ for every $t\geq t_0$ then, by~\eqref{x2*-x2}, there is $t_2\geq t_0$ such that, for $t \geq t_2$,
\[
I'(t)\geq\beta(t)I(t)(S_0(t)-k\epsilon)-(\mu+\gamma)I(t) \geq\left(\frac{F(t)}{\lambda_2(\epsilon)}-V(t)\right)I(t),
\]
where $\lambda_2(\epsilon)=\max_t\frac{S_0(t)}{S_0(t)-k\epsilon}$ satisfies $\lim_{\epsilon \to 0^+}\lambda_2(\epsilon)=1$ and $\lambda_2:(0,\epsilon^*) \to \mathbb{R}^+$ if we choose $\epsilon^*\leq \frac{1}{k}\min_t S_0(t)$ (observe that Lemma~\ref{lem:S00} and its proof guarantees that $\min_t S_0(t)>0$). 

We conclude that the conditions  in \citet[theorem~2, statement~2]{rebelo2012} are satisfied and there is uniform persistence of system~\eqref{system} with respect to $I$.

\smallskip

Finally, we prove the existence of a persistent periodic solution.
We define the $T$-mapping $P:\Delta^{2} \to \Delta^{2}$ by $P(x_0)=x(T,(x_0,0))$, where $x(\cdot,(x_0,0))$ is the solution of System~\ref{system}  with initial conditions $x(0)=x_0\in\Delta^{2}\bydef\{(S,I)\in\mathbb{R}^2_+, S+I\le 1\}$, the two-dimensional simplex. $P$ is a continuous map such that $P(M_0) \subset M_0$ for $M_0=\{(I,S) \in \Delta^{2} : I \neq 0\}$. Observe that $M_0$ is an open set of $\Delta^{2}$ with the topology induced in $\Delta^{2}$. As we have uniform persistence of system~\eqref{system} with respect to~$I$ we also have uniform persistence of $P$ with respect to $M_0$ as described in~\citet{zhao1995} (for a more general case see~\citep{magal2005}). 
Applying \citep[theorem~2.1]{zhao1995} we conclude that $P: M_0 \to M_0$ admits a global attractor and there is a fixed point for $P$ in that attractor, which is a $T$-periodic solution of~\eqref{system} (see, for example, lemma 4.4 in~\citet{verhulst1996}).

\section{Proof of Theorem~\ref{thm:existence}}\label{ap:existence}

\begin{proof}
	First we recall that $\radon$ is compact with the weak topology, cf.~\citep{koralov2007}. We have that $C([0,T])\subset \radon$ in the sense that for each continuous function we consider the correspondent cumulative distribution function. Consequently the closure $\overline{C([0,T])}\subset\radon$ is compact. 	
	The map  $\mathbb{E}: \overline{C([0,T])} \to [0,+\infty]$ defined in Definition~\ref{def:eff} and extended in Lemma~\ref{lem:effort_measure} is continuous with respect to the weak topology. 
	
	\noindent\textbf{Existence of $p_\opt$:} As $\overline{\strat_{\mathrm{p}}}\subset\radon$ is compact, from the continuity of  $\mathbb{E}$, we conclude that there is a measure $p_\opt\in\overline{\strat_{\mathrm{p}}}$ such that for all $p\in\strat_{\mathrm{p}}$, $\mathbb{E}[p]\ge\mathbb{E}[p_\opt]$.  

	\noindent\textbf{Existence of $p_\nash$:} Let us consider fixed time intervals $\Delta t$, such that $T/(\Delta t)=N\in\N$, and consider periodic continuous  piecewise affine functions in intervals $(i\Delta t, (i+1) \Delta t)$, $i \in \{0,\ldots,N-1\}$ (i.e, functions, such that $f(t)=f(\lfloor t/\Delta t\rfloor \Delta t)+\frac{f((\lfloor t/\Delta t\rfloor+1) \Delta t)-f(\lfloor t/\Delta t\rfloor \Delta t)}{\Delta t}(t-\lfloor t/\Delta t\rfloor \Delta t)$; furthermore, $f(N\Delta t)=f(0)$). These functions can be represented by vectors in $\R_+^N$. Let $\mathbf{v}\in\Upsilon\bydef \{\mathbf{v}\in\R^{N}_+|0\le v_1\le v_2\le\dots\le v_N\le (\alpha+\mu)N\frac{\bar\beta}{\gamma+\mu}\}$. The set $\Upsilon$ is convex and compact. Now for each vector $\mathbf{v}\in\Upsilon$ consider the distributions $p_{\mathbf{v}}:[0,T]\to\R_+$, such that the associated cummulative distribution is given by $\mathsf{P}_{\mathbf{v}}(t)=v_i+\frac{v_{i+1}-v_i}{\Delta t}(t-i \Delta t)$ for $t\in[i\Delta t,(i+1)\Delta t]$, $i\in\{0,\dots,N-1\}$. Consider $I[p_{\mathbf{v}}]$ and $S[p_{\mathbf{v}}]$ solutions of System~(\ref{system}), and define for $\mathbf{v},\mathbf{v}^*\in \Upsilon$ the joint risk (except for some immaterial constants)
	\[
\rho^{\Delta t}[\mathbf{v}^*,\mathbf{v}]=-\sum_{i=1}^{N}\bigl\{\left(r-\beta(i\Delta t)I[p_{\mathbf{v}}](i\Delta t)\right)\e^{-v^*_i}\bigr\}\ .
	\]
	
	It is clear that $v_i^*\approx\int_{0}^{i\Delta t}\rd\PD_*=\PD_*([0,i\Delta t])$.
	We define a function $\mathcal{F}:\Upsilon\to 2^{\Upsilon}$ such that $\tilde{\mathbf{v}}\in\mathcal{F}[{\mathbf{v}}]$ if and only if $\rho^{\Delta t}[\tilde{\mathbf{v}},\mathbf{v}]\le\rho^{\Delta t}[\mathbf{v}',\mathbf{v}]$ for all $\mathbf{v}'\in \Upsilon$.  

	It is clear that $\mathcal{F}[\mathbf{v}]\ne\emptyset$, as $\Upsilon$ is compact and $\rho^{\Delta t}[\cdot,\mathbf{v}]$ is continuous. Now, we prove that $\mathcal{F}[\mathbf{v}]$ is closed and convex. The first property follows again from the continuity of $\rho^{\Delta t}[\cdot,\mathbf{v}]$. Define $\hat r_i=r-\beta(i\Delta t)I[p_{\mathbf{v}}](i\Delta t)$. We divide the last property in two cases:
	\begin{enumerate}
	 \item Assume that $\hat r_i\le 0$ for all $i$ and let $\mathbf{v}^*$ be such that $\rho[\mathbf{v}^*,\mathbf{v}]\le\rho[\mathbf{v}',\mathbf{v}]$ for all $\mathbf{v'}\in\Upsilon$. Assume in addition that there is $\tilde{\mathbf{v}}$ such that $\rho[\mathbf{v}^*,\mathbf{v}]=\rho[\tilde{\mathbf{v}},\mathbf{v}]$ for all $\mathbf{v}\in\Upsilon$. Therefore, for $\alpha\in(0,1)$,
	 \begin{align*}
	  \rho^{\Delta t}[\alpha\mathbf{v}^*+(1-\alpha)\tilde{\mathbf{v}},\mathbf{v}]&=-\sum_i\hat r_i\e^{-\alpha v_i^*-(1-\alpha)\tilde v_i}\\
	  &\le -\sum_i\hat r_i\left(\alpha\e^{-v_i^*}+(1-\alpha)\e^{-\tilde v_i}\right)\\
	  &=\alpha\rho^{\Delta t}[\mathbf{v}^*,\mathbf{v}]+(1-\alpha)\rho^{\Delta t}[\tilde{\mathbf{v}},\mathbf{v}]=\rho[\mathbf{v}^*,\mathbf{v}]\\
	  &\le\rho^{\Delta t}[\alpha\mathbf{v}^*+(1-\alpha)\tilde{\mathbf{v}},\mathbf{v}]\ ;
	 \end{align*}
hence $\mathbf{v}^*,\tilde{\mathbf{v}}\in\mathcal{F}[\mathbf{v}]$ implies that $\alpha\mathbf{v}^*+(1-\alpha)\tilde{\mathbf{v}}\in\mathcal{F}$.
        \item Let $\mathcal{I}\bydef\{i|\hat r_i>0\}\ne\emptyset$. In order to minimize $\rho^{\Delta t}[\cdot,\mathbf{v}]$, we impose to each $i\in\mathcal{I}$ the minimum possible value, i.e., $v^*_i=v^*_{i-1}$. Therefore, we shall minimize
        \[
         \rho_{\mathcal{I}}^{\Delta t}[\mathbf{v}^*,\mathbf{v}]\bydef -\sum_{i\not\in\mathcal{I}}\bigl\{\left(r-\beta(i\Delta t)I[p_{\mathbf{v}}](i\Delta t)\right)\e^{-v^*_i}\bigr\}\ .
        \]
The existence of a minimum is guaranteed by the compacity of $\Upsilon_{\mathcal{I}}\bydef \{\mathbf{v}\in\R^{N}_+|0\le  v_1\le v_2\le\dots\le v_N\le (\alpha+\mu)N\frac{\bar\beta}{\gamma+\mu}, i\in\mathcal{I}\Rightarrow v_{i-1}=v_i\}$. Then, we repeat the previous analysis and conclude that $\mathcal{F}[\mathbf{v}]$ is closed and convex.
	\end{enumerate}

	We conclude that the set of best replies is non-empty, convex, closed and due to the continuity of $I$ in $p$ (see~Lemma~\ref{lem:sol_per}) and of $\rho^{\Delta t}$ in $\mathbf{v}$ and $\mathbf{v}^*$,  the graph is closed.
	Therefore from standard applications of Kakutani fixed point theorem, there is a fixed point vector of the function $\mathcal{F}$, $\mathbf{v}^{(\Delta t)}$, such that its affine function continuation $p^{(\Delta t)}$ is a Nash-equilibrium restricted to affine functions with steps $\Delta t$. See, e.g., \citep{osborne1995}. 
	Furthermore
	\[
		\PD^{(\Delta t)}([0,T])=\sum_{i=0}^{N-1}p^{(\Delta t)}(i\Delta t)\Delta t\le\sum_{i=0}^{N-1}v_i\Delta t\le(\alpha+\mu)N\frac{\bar\beta}{\gamma+\mu}\Delta t=(\alpha +\mu) T\frac{\bar\beta}{\gamma+\mu}\ .
	\]
	From the compactness of $\overline{C([0,T])}$, there is a measure $p\in\overline{C([0,T])}$ such that $\lim_{\Delta t\to 0}p^{(\Delta t)}=p$ (possibly after taking subsequences), where the convergence is in the weak topology. 
	
	The last step is to prove that $p$ is indeed a Nash-equilibrium in~$\overline{C([0,T])}$. Assume it is not; then, there is $\tilde p$ in~$\overline{C([0,T])}$  such that $\rho[\tilde p,p]<\rho[p,p]$. From the continuity of $\rho$, there is a $(\Delta t)_0>0$, small enough, such that all restricted Nash-equilibria found above are such that $\rho[\tilde p,p^{(\Delta t)}]<\rho[p^{(\Delta t)},p^{(\Delta t)}]$
	for $\Delta t<(\Delta t)_0$. Let $\bar p_n$ be a sequence of continuous functions in~${C([0,T])}$ such that $p_n\to\tilde{p}$ weakly, and therefore $\rho[\bar p_n,p^{(\Delta t)}]<\rho[p^{(\Delta t)},p^{(\Delta t)}]$, for a certain value of $\Delta t$ and $n$ large enough.
	Using the fact that  $\rho^{\Delta t}[\mathbf{v}^*,\mathbf{v}]$ is the trapezoidal approximation of $\rho[p^*,p]$ (and therefore differs in $\mathcal{O}((\Delta t)^{2})$) and taking $\Delta t$ possibly even smaller, we conclude that $p^{(\Delta t)}$ is not a restricted Nash-equilibrium, contradiction.
	\color{black}
\end{proof}

\section*{Acknowledgements}

The authors are grateful to Max Souza (UFF, Brazil) and Nicolas Bacaer (IRD \& Université Paris 6, France) for stimulating discussions. We are also grateful to the two anonymous referees for their comments that allowed us to improve the manuscript.
This work was partially supported by FCT/Portugal project EXPL/MAT-CAL/0794/2013, and by Strategic Project UID/MAT/00297/2013 (Centro de Matemática e Aplicações, Universidade Nova de Lisboa).


\end{document}